\bmdefine\bmu{\mu}
\newcommand{\Reals}{\mathbb{R}}
\newcommand{\PosReals}{\Reals_+}
\newcommand{\AutoAdjust}[3]{\mathchoice{ \left #1 #2  \right #3}{#1 #2 #3}{#1 #2 #3}{#1 #2 #3} }
\newcommand{\Abs}[1]{{\AutoAdjust{\lvert}{#1}{\rvert}}}
\newcommand{\InBrackets}[1]{\AutoAdjust{[}{#1}{]}}
\newcommand{\BigO}[1]{{O\AutoAdjust{(}{{#1}}{)}}}
\newcommand{\RangeAB}[2]{\AutoAdjust{[}{{#1},{#2}}{]}}
\newcommand{\RangeN}[1]{\InBrackets{#1}}
\newcommand{\EQ}{W}
\newcommand{\EQH}{\overline{\EQ}}
\newcommand{\EQL}{\underline{\EQ}}
\newcommand{\Market}{M}
\newcommand{\Buyers}{I}
\newcommand{\Goods}{J}
\newcommand{\UFun}{{\mathrm u}}
\newcommand{\PFun}{{\mathrm p}}
\newcommand{\MaxVal}{m}
\newcommand{\UVec}{{\mathbf u}}
\newcommand{\PVec}{{\mathbf p}}
\newcommand{\MVec}{{\bmu}}
\newcommand{\Val}{v}
\newcommand{\CTR}{c}
\newcommand{\CTRS}{\hat{c}}
\newcommand{\Alloc}{{y}}
\newcommand{\ISubsetS}{S}
\newcommand{\GSubsetT}{T}
\newcommand{\DemandSet}{D}
\newcommand{\UVecLB}{{\underline \UVec}}
\newcommand{\PVecLB}{{\underline \PVec}}
\newcommand{\ZeroVec}{\mathbf{0}}
\newcommand{\BREAKCOL}{&}
\newcommand{\argmax}{\operatorname{argmax}}
\newtheorem{theorem}{Theorem}%
\newtheorem{lemma}{Lemma}%
\newtheorem{definition}{Definition}%
\newtheorem{example}{Example}%
\newtheorem{corollary}{Corollary}%
\numberwithin{equation}{section}%
\newcommand{\tosaeed}{}%{\fbox{\texttt{saeed is editing \autoref{saeed}}}}
\title{Competitive Equilibria in Two Sided Matching Markets with Non-transferable Utilities~\protect\footnote{$ $Id: ce.tex 315 2012-11-13 22:09:56Z saeed $ $}}
\date{}
\author{
Saeed Alaei \tosaeed \thanks {Dept.\ of Computer Science, Cornell University, Ithaca, NY 14850.
\texttt{saeed@cs.cornell.edu}. Part of this work was done when the author was visiting Microsoft Research,
Redmond. This work was partially supported by the NSF grant CCF-0728839},
Kamal Jain\thanks {eBay Research, 2065 Hamilton Ave, San Jose, 95125 \texttt{kamaljain@gmail.com}.},
Azarakhsh Malekian\thanks {Laboratory of Information and Decision Systems, MIT, Cambridge, MA 02142.
\texttt{malekian@mit.edu}.}}
\begin{document}
\maketitle

%\thispagestyle{empty}
%\newpage
%\setcounter{page}{1}
%%%%%%%%%%%%%%%%%%%%%%%%%%%%%%%%%%%%%%%%%%%%%%%%%%%%%%%%%%%%%%%%%%%%%%%%%%%%%%%%%%%%%%%%%%%%%%%%%%%%%%%%%%%%%
%Abstract
%%%%%%%%%%%%%%%%%%%%%%%%%%%%%%%%%%%%%%%%%%%%%%%%%%%%%%%%%%%%%%%%%%%%%%%%%%%%%%%%%%%%%%%%%%%%%%%%%%%%%%%%%%%%%

\begin{abstract}

We consider two sided matching markets consisting of agents with non-transferable utilities; agents from the
opposite sides form matching pairs (e.g., buyers-sellers) and negotiate the terms of their math which may
include a monetary transfer. Competitive equilibria are the elements of the core of this game.

\if 0 We consider two sided matching markets with non-transferable utilities in which each agent can form a
partnership with at most one other agent from the opposite side (e.g., buyers-sellers); agents decide the
terms of their partnership which may include a monetary transfer. Competitive equilibria are the elements of
the core of this game. \fi

We present the first combinatorial characterization of competitive equilibria that relates the utility of
each agent at equilibrium to the equilibrium utilities of other agents in a strictly smaller market excluding
that agent; thus automatically providing a constructive proof of existence of competitive equilibria in such
markets.

Our characterization also yields a group strategyproof mechanism for allocating indivisible goods to unit
demand buyers with non-quasilinear utilities that highly resembles the Vickrey--Clarke–-Groves (VCG)
mechanism. As a direct application of this, we present a group strategyproof welfare maximizing mechanism for
Ad-Auctions without requiring the usual assumption that search engine and advertisers have consistent
estimates of the clickthrough rates.

\end{abstract}

\pagenumbering{roman} \thispagestyle{empty}

%\todo{Show that a converging sequence of CEs converges to a competitive equilibira}
%\todo{Continuity of Equilibria}
%\todo{Dual Market}
%
%\todo{Note the abuse of notation for $\in$}
%
%\todo{NP-Hardness?}
%
%\todo{Add notation $i \Eq{W} j$ ? not for now.}
%
%\todo{Merge the conservation of matching lemma with binding lemma}
%
%\todo{Mention that 2.1 rocks!}
%
%\todo{Exemplify the ascending auction.}%
%
%\todo{Core? why?}

%\todo{Ad Auctions?}

%\todo{add reference to scot's paper}

%\todo{add ack's}
\newpage
\pagenumbering{arabic}

\section{Introduction}

In markets with a common currency it is typical to assume transferable utilities, i.e., that an agent can
transfer some of its utility to another agent at no loss;  however this assumption rarely holds in practice
as wealthy agents often derive less marginal utility from a fixed increase to their wealth compared to poor
agents. The situation is more pronounced when monetary transfers are significant compared to agents' wealth.
We consider non-transferable utilities in the context of matching markets.

In this paper we consider a two-sided matching market with non-transferable utilities. Agents are divided in
two sides (e.g., sellers-buyers, men-women, etc). Each agent can partner with at most one other agent from
the opposite side, and they mutually decide about the terms of their partnership which may involve a monetary
transfer.  We are interested in the ``\emph{stable}'' outcomes of this market, i.e., outcomes in which no
coalition of agents could rearrange in such a way that would strictly benefit every member of the coalition.
Any such stable outcome corresponds to a \emph{competitive equilibrium}.

A competitive equilibrium is a feasible matching of agents (including the terms of matching and monetary
transfers) such that (i) there are no agents who could form a matching pair in such a way that would benefit
both of them better than their current state, and (ii) there is no matched agent who would prefer to be
unmatched.

Throughout this paper we consider markets in canonical form consisting of buyers on one side and goods (or
sellers) on the opposite side such that (i) utility of each buyer depends on the choice of good received and
the payment charged (utilities are decreasing in payment but not necessarily quasilinear), and (ii) utility
of each seller is equal to the payment received.  A competitive equilibrium then corresponds to an assignment
of prices to goods together with  envy--free matching of goods to buyers, i.e., such that any buyer receives
the good that she prefers the most at the assigned prices. In particular, any unassigned good should have a
price of zero. Any market can be represented in the canonical form by treating the utilities of agents of one
side of the market as prices of goods (i.e., each agent represented by a good), and by considering agents of
the opposite side as buyers;  the utility of a buyer for a good at a price $x$ then corresponds to the
highest utility the agent represented by the buyer could get if matched with the agent represented by the
good under any set of terms or monetary transfer that would give the agent represented by the good a utility
of at least $x$.

Competitive equilibria and their various properties have been studied extensively in the past, e.g.,
\citet{DG85}, \citet{G84}, \citet{Q82}. It has been shown that a competitive equilibrium always exists,
however previous proofs are non-constructive and provide little insight into equilibrium structure. There has
been independent recent work on algorithmic and/or graph theoretic characterizations of competitive
equilibria, e.g., \citet{CL10a,CL10b} and \citet{MOP10a, MOP10b}; which have been focused on obtaining
iterative processes for bottom-up construction of competitive equilibrium by growing an indifference graph
while making gradual adjustments to prices. The current paper provides a simple and natural characterization
of competitive equilibria based on a top-down construction, also yielding a simple recursive algorithm for
computing them; it also provides deep insight into equilibrium structure and how it reacts to changes in
supply/demand.

A competitive equilibrium can be concisely represented by its price vector. A price vector induces a utility
for each buyer which is equal to the highest utility the buyer could get from any of the goods at the
specified prices.  At a competitive equilibrium, utilities of buyers are exactly equal to their induced
utilities from the equilibrium prices. Similarly, a utility vector induces a price on each good, which is the
highest price any buyer would pay for that good given the specified utility vector. \citet{G84} showed that
the set of equilibrium price vectors  form a complete lattice (similarly the set of equilibrium utility
vectors). In particular, there is a competitive equilibrium with the highest prices (and lowest utilities),
and  there is one with the lowest prices (and highest utilities).

In this paper we present a simple combinatorial characterization of competitive equilibria in two-sided
matching markets with non-transferable utilities. Our characterization is inductive, i.e., it relates the
equilibrium prices/utilities to those of strictly smaller markets, thereby providing a constructive proof of
existence of competitive equilibria, and yielding a purely combinatorial approach for computing the
equilibria. We summarize our results below:
\begin{itemize}
\item

The utility of a buyer at the lowest priced competitive equilibrium is equal to her induced utility from the
prices of the highest priced competitive equilibrium of the market without her.  Similarly the price of a
good at the highest priced competitive equilibrium is equal to its induced price from the utilities of the
lowest prices competitive equilibrium of the market without that good. Combining the previous two
characterization implies that the price a buyer pays for a good she receives at the lowest priced competitive
equilibrium is equal to the highest price the rest of the market would be willing to pay for that good, i.e.,
it is the price at which the rest of the market becomes indifferent between buying and not buying the good.
The previous characterization also yields a mechanism for obtaining the lowest priced competitive equilibrium
which highly resembles the Vickrey--Clarke–-Groves (VCG) mechanism, albeit for non-transferable utilities.

\item
There is a continuum of competitive equilibria between the lowest priced competitive equilibrium and the
highest priced competitive equilibrium. In particular, given any lower bound on the price vector or utility
vector,  one could obtain a competitive equilibrium satisfying the bounds, if any, using a similar
combinatorial characterization as above.

\item
At the lowest priced competitive equilibrium, every set of goods with strictly positive prices is weakly over
demanded, i.e., there is a buyer who is not matched to any of the goods in that set, but is weakly interested
in at least one of the goods in that set. Similarly, at the highest priced competitive equilibrium, in every
set of buyers with strictly positive utilities there is a buyer who is weakly interested in a good that is
not among the goods matched to that set of buyers. The next two properties can be derived from the previous
two properties.

\item
At the highest priced competitive equilibrium, the market is indifferent between keeping or losing any single
good. In other words, if any single good is removed from the market, there is matching of the remaining goods
to buyers at the current prices such that the market remain at equilibrium with the same prices/utilities as
before. Similarly, at the lowest priced competitive equilibrium, the market is indifferent between keeping or
losing any single buyer. In other words, if any single buyer is removed from the market, there is matching of
the goods to remaining buyers at the current prices such that the market remain at equilibrium with the same
prices/utilities as before.

\end{itemize}

\paragraph{Roadmap}
We start by defining our model and assumptions in \autoref{sec:prelim}.  In \autoref{sec:challenge}, we
provide a brief overview of competitive equilibria and their structure in the case of transferable utilities,
we then identify the main difficulties of applying existing techniques to non-transferrable utilities. We
then present our main results in  \autoref{sec:main} . Finally, in \autoref{sec:aa} we present an immediate
application of our results to Ad-Auctions. The next section provides a quick review of the related
literature.

\if 0

 In this paper, we assume that $u$ is a function that is strictly decreasing and continuous in $p$, but
otherwise arbitrary. A soft budget constraint can be readily specified by a utility function of this form. On
the other hand, a hard budget constraint makes the utility function discontinuous at the point where the
payment reaches the budget limit, yet it can be reasonably approximated by a soft budget
constraint\footnote{It can be approximated by a utility function that is linear in payment up to the budget
and then declines rapidly but continuously near the budget limit.}. To compare the flexibility and generality
of hard and soft budget constraints, consider the following example: consider the utility function of a buyer
in an auction for a single indivisible good; the utility function of such a buyer with hard budget constraint
can be specified by a pair of real values $(v,B)$ specifying the valuation $v$ and the budget $B$ of the
buyer. On the other hand, with soft budget constraint, this utility function can be almost any arbitrary
continuous decreasing function of payment. This suggests that designing truthful auctions for soft budget
constraints should be in general more difficult since the utility functions may contain an arbitrary amount
of private information. Also, the discontinuity of utility function with hard budget constraints suggests
that they may not be natural in many practical settings. For example, consider a home buyer who faces a
budget limit $B$. The buyer might still be able to pay more than $B$ through a mortgage loan. However the
buyer would incur additional costs (e.g. interests and fees). The utility function of such a buyer can be
represented more appropriately by a function that is continuously decreasing in payment, albeit at a steeper
slope when payment is more than $B$. In \autoref{sec:challenge}, we discuss the main difficulties of dealing
with soft budget constraints. Note that we assume utility functions to be private information of buyers and
budget constraints to be part of the utility functions so they are private as well.

For the most of this paper, we consider CEs in buyer/good markets. A buyer/good market consists of a set of
unit-demand buyers with non-quasilinear utilities and a set of heterogenous indivisible goods. In
\autoref{sec:gm} we consider a more general model in which there are agents with non-quasilinear utilities on
both sides of the market. We show that the seemingly more general market model can be reduced to the simpler
buyer/good market model. CEs are important as they represent the outcomes that have the core
property\footnote{An outcome has the core property if there is no coalition of agents who can block or
improve upon that outcome (see \citet{MWG95}). A coalition is said to block an outcome if the members of that
coalition could have obtained higher utilities under a different outcome that they could jointly achieve
without including the agents outside the coalition.}. Core is well defined both for quasilinear and
non-quasilinear utilities. In fact, with quasilinear utilities, any core outcome also maximizes the sum of
the utilities, i.e., core outcome are a subset of efficient outcomes. However, with non-quasilinear utility
functions, utilities are non-transferable, therefore it makes no sense to maximize the sum of the utilities.
To make this clear, consider the following example. Suppose a seller wants to sell a good to a number of
buyers with private valuations. Suppose one of the buyers have a budget of $0$ and the other ones have no
budget constraints. Suppose the buyer with $0$ budget has a high valuation and the other buyers have low
valuations. It is easy to see that to maximize the sum of the utilities the good must be allocated to the
buyer with the $0$ budget, yet there is no truthful mechanism to do so even approximately.

 \textbf{Application to Ad-Auctions:} Here we present a motivating problem from Ad-Auctions. Observe
that the problem can be modeled as a two-sided matching market where each good corresponds to an ad-slot and
each buyer corresponds to an advertiser. An obvious application of non-quasilinear utilities to Ad-Auctions
would be to model budget constraints. However, here we consider a more interesting application. We consider
the Ad-Auction problem with charge-per-click (CPC) advertisers in a setting where the search engine and the
advertisers may have different beliefs about the clickthrough rates (CTR). All classical results from the
literature, including VCG, crucially depend on the assumption that everyone agrees on the CTRs. In the
absence of this assumption, all such mechanisms, including VCG, fail to work for the following reason: the
objective of the mechanism is to maximize either the expected welfare or the expected revenue \emph{per
impression}, therefore the payments are also computed \emph{per impression}, yet advertisers are often
charged \emph{per click}. A per-impression payment is converted to a per-click payment based on the CTR. When
the advertisers and the search engine do not agree on the CTRs, they perceive different amounts of
per-impression payment from the same per-click payment, so these mechanisms are no longer even truthful. In
fact, the question of whether there is a \emph{nice} mechanism for this setting was posed in \citet{AMPP09}
as an open problem. We answer this question affirmatively. Next, we propose a mechanism which has the
following nice properties.
\begin{itemize}
\item
Let $A$ denote the set of advertisers that agree with the search engine on the CTRs, and let $A'$ denote the rest of the
advertisers. The combined welfare of the search engine and advertisers in $A$ is always maximized. In fact, the presence of $A'$
may only increase the revenue of the search engine and also the combined welfare of the search engine and the advertisers $A$. In
particular, when all advertisers agree on the CTRs, the outcome of our mechanism coincides with the VCG outcome.
\item
Our mechanism is group-strategy proof (i.e., it is truthful even if we allow multi-lateral deviations).
\end{itemize}

We describe the mechanism in the form of an ascending auction, however the final mechanism is static: for
each slot $j$, the search engine maintains a variable $\PVec^j$ which denotes the per-impression price of
slot $j$, these variables are initially set to $0$ and increased throughout the auction until an equilibrium
is reached. At any point during the auction, each advertiser specifies zero, one or more slots as their most
preferred slot(s) at the current prices. However, each CPC advertiser observes a different per-click price.
Precisely, advertiser $i$ observes a per-click price of $\PVec^j/\CTRS_i^j$ for slot $j$, where $\CTRS_i^j$
is the estimate of the search engine for the CTR of advertiser $i$ in slot $j$. The prices of slots that are
over demanded are increased. The auction stops at the lowest prices at which no slot is over demanded. Let
$c_i^j$ denote the estimate of the advertiser $i$ about their CTR in slot $j$, and $v_i^j$ denote their
valuation for a click in slot $j$. The expected utility of advertiser $i$ in slot $j$ per each impression is
$c_i^j (v_i^j-\PVec^j/\CTRS_i^j)$. So the utility function of advertiser $i$ can be written as
$\UFun_i(j,p)=c_i^j (v_i^j-p/\CTRS_i^j)$ where $p$ is the amount of per-impression payment. Observe that the
outcome of the ascending auction corresponds exactly to a competitive equilibria with the lowest
per-impression prices, i.e. of advertiser $i$ is matched to slot $j$, then for every slot $j'$,
$\UFun_i(j,\PVec_j)\ge \UFun_i(j',\PVec_{j'})$. Notice that $\UFun_i(j,p)$ is not quasilinear because the
coefficient of $p$ depends on $j$. When the advertisers and the search engine agree on the CTRs, the utility
functions can be simplified as $\UFun_i(j,p)=c_i^j v_i^j-p$ which are quasilinear. The non-quasilinearity of
$\UFun_i$ can be interpreted as follows: when the search engine charges a per-impression payment of
$\PVec^j$, it is translated to a per-click payment of $\PVec^j/\CTRS_i^j$, which is then perceived by the
advertiser as a per-impression payment of $c_i^j \PVec^j/\CTRS_i^j$. Note that the actual mechanism can be
implemented as follows. The advertisers submit their $v_i^j$ and $c_i^j$ and then the search engine computes
the competitive equilibria that has the lowest prices. As we will show later, the mechanism we propose is
independent of the structure of the function $\UFun_i(j,p)$. In fact, our mechanism works for any
$\UFun_i(j,p)$ that is continuously decreasing in $p$. We will study this mechanism in more detail in
\autoref{sec:aa}.

\fi

\subsection{Related Work}
The problem we consider is a one-to-one matching with  non-transferable utilities as described
by~\citet{DG85}. The problem is a generalization of the assignment game of~\citet{SS71} to non-quasilinear
utilities. \citeauthor{SS71} studied this problem for quasilinear utilities, proved that core outcomes always
exist, and showed that they form a lattice. Note that core outcomes correspond exactly to competitive
equilibria in these models. The existence of competitive equilibria for non-quasilinear utilities was proved
by~\citet{Q82} and by~\citet{G84}. \citeauthor{Q82} showed that the game defined by this model is a
\emph{``Balanced Game''}, where for general n-person balanced games \citet{S67} had shown that the core is
non-empty. Using a similar approach \citet{KY86} proved a similar result for a slight generalization of this
problem. \citet{G84} also showed that a competitive equilibrium always exists using combinatorial topology.
The proof of \citeauthor{G84} is based on a generalization of the KKM lemma (see \citet{KKM29}) which is the
continuous variant of the Sperner's lemma. Both of these proofs are non-constructive and only show the
existence of an equilibrium. As such, they don't provide a natural characterization of equilibrium utilities
\footnote{The proof of \citeauthor{S67} provides an algorithm based on the pivoting algorithm of
\citet{LH64}, which can be combined with the construction of \citeauthor{Q82} to yield an algorithm that
computes a competitive equilibrium in $2^{\BigO{n!}}$ iterations running on a matrix with $\BigO{n!}$
columns. Nevertheless, the resulting algorithm is more of an exhaustive search and does not provide any
insight into the equilibrium structure.}. \citet{L91} showed that in one-to-one markets with quasilinear
utilities, prices at the lowest competitive equilibrium equal VCG payments. The proof of \citeauthor{L91} is
based on writing the LP for computing the welfare maximizing allocation(i.e., a maximum weight matching) and
showing that every optimal assignment of the dual variables correspond exactly to the prices/utilities at a
competitive equilibrium. The proof crucially depends on the quasilinearity of utilities. \citet{DG85} studied
various properties of these markets in the case of non-quasilinear utilities. They showed that the set of
competitive equilibria is a lattice. \citet{DGS86} later proposed an ascending auction for computing the
competitive equilibrium with the lowest prices in the case of quasilinear utilities. Interestingly, their
method is the same as the Hungarian method (see \citet{K56}) for finding a maximum weight matching. Notice
that at this point there was still no combinatorial characterization of competitive equilibria for the case
of non-quasilinear utilities and no constructive proof of existence except for the ascending auction of
\citet{AG89} for piece-wise linear utility functions.

There has been independent recent work on algorithmic and/or graph theoretic characterizations of competitive
equilibria by \citet{CL10a,CL10b} and \citet{MOP10a, MOP10b}; which have been focused on obtaining iterative
processes for bottom-up construction of competitive equilibrium by growing an indifference graph while making
gradual adjustments to prices.

The are closely related works involving market with many-to-one or many-to-many matchings, either with
quasilinear utilities or with ordinal preferences. For competitive equilibria in markets with many-to-one
matchings and with quasilinear utilities see \citet{bm97,GS99,AM02}. For stable (core) allocations in markets
with many-to-one or many-to-many matchings with ordinal preferences see ~\citet{GS62,KC82,HM05,HK10}.

\section{Preliminaries}
\label{sec:prelim}%

\paragraph{Model}
Let $\Market=(\Buyers, \Goods, \UFun)$ denote a market where $\Buyers$ is a set of unit demand buyers,
$\Goods$ is a set of goods, and $\UFun_i^j(x)$ denote the utility of buyer $i \in \Buyers$ for receiving good
$j \in \Buyers$ and making a payment of $x$. The utility functions are private information of the respective
buyers. We assume that every $\UFun_i^j(x)$ is continuous and decreasing and there exists a large enough
$\MaxVal_i^j \in \PosReals$ such that $\UFun_i^j(\MaxVal_i^j) \le 0$. Without loss of generality, we assume
that $\UFun_i^j(x)$ is defined for all $x \in \Reals$ and its range also covers the whole $\Reals$
\footnote{Suppose $\MaxVal_i^j$ is the smallest non-negative payment for which $\UFun_i^j(\MaxVal_i^j) \le
0$. Since at the equilibrium all prices/utilities are non-negative, we can redefine $\UFun_i^j$ for values of
$x$ outside of $\RangeAB{0}{\MaxVal_i^j}$ as follows without affecting the equilibrium: for $x < 0$ redefine
$\UFun_i^j(x)\leftarrow \UFun_i^j(0)-x$, and for $x> \MaxVal_i^j$ redefine $\UFun_i^j(x)\leftarrow\UFun_i^j(
\MaxVal_i^j)-(x-\MaxVal_i^j)$.}, and therefore $\UFun_i^j(x)$ is invertible everywhere. We will use
$\PFun_i^j$ to denote the inverse of $\UFun_i^j$, i.e., in order to give buyer $i$ a utility of $x$ from good
$j$, the buyer should be charged a payment of $\PFun_i^j(x)$. Observe that $\PFun_i^j(x)$ is also a
continuous and decreasing function whose domain/range is the whole $\Reals$. For the rest of this paper, we
adopt the notation of using $\UFun$ and $\PFun$ to denote functions returning utilities and prices, and
$\UVec$ and $\PVec$ to denote utility vector and price vector. We use subscripts to index agents and
superscripts to index goods (i.e., $\UVec_i$, $\PVec^j$, etc). Negative subscripts/superscripts are used to
exclude the specified index (e.g., $\UVec_{-i}$ is the same as $\UVec$ but with the $i^{th}$ entry removed).

\begin{definition}[Competitive Equilibrium (CE)]
\label{def:CE}%
A \emph{``Competitive Equilibrium''} (henceforth abbreviated as CE) is an assignment of prices to goods
together with a feasible matching of goods to buyers such that each buyer receives her most preferred good at
the assigned prices, and such that every unmatched good has a price of $0$. Formally, for a market
$\Market=(\Buyers, \Goods, \UFun)$ we say that $\EQ=(\UVec, \PVec)$ is a CE of $\Market$ with price vector
$\PVec$ and utility vector $\UVec$, if and only if there exists a \emph{``supporting matching''} $\MVec$ such
that:
\begin{itemize}
\item
For every buyer $i$ and good $j$, if $i$ and $j$ are matched in $\MVec$, then $\UVec_i = \UFun_i^j(\PVec^j)$,
otherwise $\UVec_i \ge \UFun_i^j(\PVec^j)$.
\item
Every buyer $i$ that is unmatched in $\MVec$ should have a zero utility (i.e., $\UVec_i=0$). Similarly, Every
good $j$ that is unmatched in $\MVec$ should have a zero price (i.e., $\PVec^j=0$).
\item All prices/utilities must be non-negative, i.e., $\UVec_i \ge 0, \PVec^j \ge 0$.
\end{itemize}
For notational convenience, we often use $\UFun(\EQ)$, $\PFun(\EQ)$ to refer to $\UVec$, $\PVec$
respectively. $\MVec(i)$ will denote the good matched to buyer $i$ (if unmatched, $\MVec(i)=\emptyset$), and
$\MVec^{-1}(j)$ will denote the buyer matched to good $j$ (if unmatched, $\MVec^{-1}(j)=\emptyset$).
\end{definition}

%%%%%%%%%%%%%%%%%%%%%%%%%%%%%%%%%%%%%%%%%%%%%%%%%%%%%%%%%%%%%%%%%%%%%%%%%%%%%%%%%%%%%%%%%%%%%%%%%%%%%%%%%%%%%
% definition: induces prices/induced utilities
%%%%%%%%%%%%%%%%%%%%%%%%%%%%%%%%%%%%%%%%%%%%%%%%%%%%%%%%%%%%%%%%%%%%%%%%%%%%%%%%%%%%%%%%%%%%%%%%%%%%%%%%%%%%%

Note that any CE, say $\EQ$, can be specified by either $\UFun(\EQ)$ or $\PFun(\EQ)$; given either the price
vector or the utility vector, the other one can be uniquely computed by taking the induced prices/induced
utilities as defined next.

\begin{definition}[Induced utilities/Induced Prices]
Given a market $\Market=(\Buyers, \Goods, \UFun)$ and a price vector $\PVec$, we define $\UFun_i(\PVec)$ to
denote the utility induced on buyer $i$ by offering the goods at prices $\PVec$. We can formally define
$\UFun_i$ as follows.
\begin{align}
    \UFun_i(\PVec) & = \max(0, \max_{j \in \Goods} \UFun_i^j(\PVec^j)) \label{eq:ind:u}
\end{align}
Similarly, given a utility vector $\UVec$, we define $\PFun^j(\UVec)$ to denote the price induced on good $j$
which is the highest price any buyer would pay for good $j$ assuming their utilities are fixed at $\UVec$,
i.e.,
\begin{align}
    \PFun^j(\UVec) & = \max(0, \max_{i \in \Buyers} \PFun_i^j(\UVec_i)) \label{eq:ind:p}.
\end{align}
\end{definition}

It is easy to see that if $\EQ=(\UVec, \PVec)$ is a CE, then $\PVec^j = \PFun^j(\UVec)$ and $\UVec_i =
\UFun_i(\PVec)$\footnote{The inverse is not true, i.e., given non-negative vectors $\PVec$ and $\UVec$ such
that for all $i$ and $j$ we have $\PVec^j = \PFun^j(\UVec)$ and $\UVec_i = \UFun_i(\PVec)$, there may exist
no feasible matching $\MVec$ such that $(\UVec, \PVec)$ is a CE.}. Throughout the rest of this paper, we
often specify a CE such as $\EQ$ by specifying either $\PFun(\EQ)$ or $\UFun(\EQ)$; note that once $\PVec$
and $\UVec$ are determined, it is straightforward to find a supporting matching $\MVec$\footnote{If $\PVec$
and $\UVec$ correspond to a CE, then a supporting matching $\MVec$ can be computed as follows: consider the
bipartite graph in which there is an edge between each buyer $i$ and good $j$ iff $\UVec_i =
\UFun_i^j(\PVec^j)$ and find a maximum matching covering all vertices with strictly positive price/utility.}.
Note that there could be multiple feasible supporting matchings for a given set of prices/utilities; in that
case we could pick any such matching arbitrarily.

\section{Transferable vs. Non-Transferable Utilities}
\label{sec:challenge}%

We start by describing the structure of CEs for transferable utilities and then compare it with the more
general case of non-transferable utilities. We show that the standard approaches for analyzing and/or
computing CEs with transferable utilities fail to generalize to non-transferable utilities; in particular we
show that ascending auctions --- even when carefully modified to work for non-transferable utilities --- may
take arbitrarily long time to converge.

Consider a market $\Market=(\Buyers, \Goods, \UFun)$ with transferable utilities. Without loss of generality
we may assume that payments(prices) are also measured in the same units as the utilities, therefore we may
assume every utility function is of the form $\UFun_i^j(x)=\Val_i^j-x$, where $\Val_i^j$ can be thought of as
the valuation of the buyer $i$ for good $j$. Then every CE of the market corresponds to a maximum weighted
matching and a minimum weight covering of the bipartite graph consisting of buyers/goods in which the edge
between buyer $i$ and good $j$ has weight $\Val_i^j$. The proof of this claim follows from the linear
programming relaxation of the maximum weight matching problem and its dual as shown below. In the following,
$\Alloc_i^j$ is the variable corresponding to the allocation of good $j$ to buyer $i$.

\begin{align*}
    &\begin{aligned}
    \text{\textbf{(Primal)}} && \\
    \text{maximize} &\,& &\sum_{i \in \Buyers} \sum_{j \in \Goods} \Val_i^j \Alloc_i^j \\
    \text{subject to}   && & \textstyle \sum_{j \in \Goods} \Alloc_i^j \le 1, & & \forall i \in \Buyers  \\
                        && & \textstyle \sum_{i \in \Buyers} \Alloc_i^j \le 1, & & \forall j \in \Goods \\
                        && & \Alloc_i^j \ge 0 \\
    \end{aligned}
    \BREAKCOL
    &\begin{aligned}
    \text{\textbf{(Dual)}} &&\\
    \text{minimize} &\,&& \sum_{i\in \Buyers} \UVec_i + \sum_{j\in \Goods} \PVec^j \\
    \text{subject to}   && & \UVec_i + \PVec^j \ge \Val_i^j,  & & \forall i \in \Buyers, \forall j \in \Goods \\
                        && & \UVec_i \ge 0 \\
                        && & \PVec^j \ge 0
    \end{aligned}
\end{align*}

The optimal assignments of the above linear program and its dual correspond to CEs as follows. Consider an
extreme point optimal assignment of the primal and any optimal assignment for the dual. The extreme points of
the matching polytope are integral so every $\Alloc_i^j$ is either $0$ or $1$. By strong duality both the
primal and the dual have the same optimal value which happens to be equal to the optimal social welfare. By
complementary slackness, if $\Alloc_i^j=1$, then the corresponding dual constraint must be tight which
implies $\UVec_i = \Val_i^j-\PVec^j$. Similarly, complementary slackness implies that every good with a
non-zero price (similarly ever buyer with a non-zero utility) should be matched in the primal. Finally,
observe that the dual constraints ensure that the utility of any buyer from her current assignment is the
maximum she can get from any good under the current prices.

The previous argument also implies that every CE yields a welfare maximizing allocation. Furthermore, among
all CEs (i.e., all optimal assignments of the dual program), the one with the minimum prices coincides with
the outcome of the Vickrey--Clarke–-Groves (VCG) mechanism.

A maximum weight matching can alternatively be computed using the \emph{Hungarian Method} of \citet{K56}.
Interestingly, the Hungarian method is equivalent to the following ascending price auction proposed by
\citet{DGS86} for computing the CE with the lowest prices:

\begin{definition}[Ascending Price Auction]
\label{auc:hung}%
Set all the prices equal to $0$. Find a minimally over demanded subset of goods at the current prices, i.e.,
a subset $\GSubsetT$ of goods such that there is a subset $\ISubsetS$ of the buyers who strictly prefer one
or more of the goods in $\GSubsetT$ to goods outside of $\GSubsetT$ at the current prices and such that
$\Abs{\ISubsetS} > \Abs{\GSubsetT}$; increase the prices of goods in $\GSubsetT$ simultaneously (at a uniform
rate) until one of the buyers in $\ISubsetS$ becomes indifferent between a good outside of $\GSubsetT$ and
her preferred good(s) in $\GSubsetT$; at that point, recompute the minimally over-demanded subset and repeat
the process until there is no over demanded subset of goods.
\end{definition}

Unfortunately none of the above approaches can be effectively generalized to handle non-transferable
utilities. The linear programming relaxation and the VCG mechanism heavily rely on quasi-linearity of
utilities. Furthermore, the ascending auction -- even if modified to work with non-transferable utilities --
may take arbitrarily long time to reach equilibrium. Constructive approaches have been proposed for special
classes of non-transferable utilities (e.g., \citet{AG89}), all of which essentially boil down to an
ascending auction of the following form: raise the prices at some rate (potentially non-uniform) to the next
point at which there is a change in the demand structure \footnote{Note that the prices can be jumped
(discretely) to the next point at which there is a change in the demand sets.}; then, recompute the demand
structure and repeat. For quasilinear utilities, the ascending auction stop after
$\BigO{\Abs{\Buyers}+\Abs{\Goods}}$ iterations (a new iteration starts each time the demand sets change). For
special cases of non-quasilinear utilities (for example quasilinear but with a hard budget constraints)
similar ascending auctions have been applied. However such auctions may not reach an equilibrium in finite
time as we illustrate with an example next.

The main problem with ascending auctions for non-transferable utilities occurs when the prices of an over
demanded subset of goods, say $\GSubsetT$, is to be raised. For transferable (quasilinear) utilities, the
prices of all of the goods in $\GSubsetT$ are raised at the same rate without affecting the relative
preferences of buyers over the goods in $\GSubsetT$; however, that is not the case for non-transferable
(non-quasilinear) utilities. In the general case, one may need to raise the prices of goods in $\GSubsetT$ at
different and possibly variable rates and even then the preferences of buyers over the goods in $\GSubsetT$
may change an unbounded number of times. We demonstrate the problem in the following example:

\begin{example}
\label{ex:1}%
Suppose there are 3 goods and 4 buyers whose utility functions are given in the following table in which
$\Val$ is some constant (at least $2$) and $x$ is the payment(price):

\begin{centering}
\small
\begin{tabular}{|l|l|l|l|}
\hline
       & good 1    & good 2    & good 3 \\
\hline
buyer 1 & $\UFun_1^1(x)=\Val+1-x$ & $\UFun_1^2(x)=\Val+1-x$   & $\UFun_1^3(x)=\Val+1-x$  \\
\hline
buyer 2 & $\UFun_2^1(x)=0-x$       & $\UFun_2^2(x)=\Val+1-x$ & $\UFun_2^3(x)=0-x$ \\
\hline
buyer 3 & $\UFun_3^1(x)=0-x$       & $\UFun_3^2(x)=0-x$       & $\UFun_3^3(x)=\Val+1-x$ \\
\hline%
buyer 4 & $\UFun_4^1(x)=\Val-x$   & $\UFun_4^2(x)=\Val-x-\frac{\Val-x}{\Val}\sin(\Val\log(\Val-x))$  &
$\UFun_4^3(x)=\Val-x-\frac{\Val-x}{\Val}\cos(\Val\log(\Val-x))$ \\ \hline
\end{tabular}
\end{centering}

All buyers have quasilinear utilities except buyer $4$. Moreover, all utilities are strictly decreasing in
payment (assuming $\Val\ge 2$). \autoref{fig:ascending_auction} shows the prices of goods throughout the
ascending auction. We should emphasis that in this particular example the ascending path of prices is unique.
The ascending auction can only increase the prices of goods that are over demanded, i.e., demanded by at
least two buyers. Furthermore, it can only raise the price of a good to the point where the demand of that
good is about to drop to $1$. Therefore, for every good with a positive price there should be at least a
(weak) demand of $2$ at any point in the auction. Observe that the demand set of buyer $1$ and $4$ changes an
infinite number of times during the ascending auction. Specifically, the demand set of both buyer $1$ and $4$
include good $1$ at all times. However, the demand set of buyer $1$ includes good $2$ and/or good $3$ only at
the times the price curves of those goods overlap with the price curve of good $1$. Similarly, the demand set
of buyer $4$ includes good $2$ and/or good $3$ only at the times the price curves of those goods do not
overlap with the price curve of good $1$. Observe that the demand structure changes an infinite number of
times as the price of the goods approach $\Val$. So an ascending auction does not stop in finite time.
%Every time the price curve of good 2 and the price curve of good 1 join, good 2 is added to the demand set of
%buyer $1$ and is dropped from the demand set of buyer $4$ and every time they separate this change is
%reversed. The same thing holds for good 3 and good 1.
\end{example}

\begin{figure}[h]
\centering
\includegraphics[height=0.4\textheight]{ascending_auction2.pdf}\\
{\tiny The blue and red curves have been slightly shifted down to make the black line visible.}
\caption{prices of goods in the ascending auction of \autoref{ex:1} assuming that $\Val=11$ and assuming the
price of good $1$ increases at the rate of $1$ .\label{fig:ascending_auction}}
\end{figure}

The previous example, although contrived, illustrates a fundamental problem that arises with ascending
auctions and constructive proofs that are based on them. In general, ascending auctions are quite sensitive
to the structure of utility functions. One of the main contributions of the current paper is a direct
approach for computing the lowest CE without running an ascending auction (see \autoref{thm:ind}).

\textbf{Quasilinear with hard budget limit.} In this case the utility function is quasilinear as long as the
payment does not exceed the budget limit; and if the payment exceeds the budget, the utility function goes to
negative infinity~\footnote{This ensures that an individually rational mechanism never charges the buyer more
than her budget limit.}. Notice that the issue outlined in the previous example does not arise with
quasilinear utilities with hard budget constraints. In fact, ascending auctions with hard budget constraints
converge almost as fast as ascending auction with quasilinear utilities and no budget constraints because
each buyer may hit her budget limit at most $\Abs{\Goods}$ times (once per each good) and beyond that point
she never demands the same good again. For the same reason, quasilinear utilities with hard budget limits are
easier to deal with than general non-quasilinear utilities.

%?? mie ha?
%It is worth mentioning that the proofs of \citet{AMPP09} and \citet{ABHLT10} are based on such ascending
%auctions and therefore not applicable to our setting.

%todo: talk about computing other half way equilibria
%todo: give a reference to the multi-linear utilities to show the hardness

\section{Characterization of Competitive Equilibria}
\label{sec:main}%

This section studies structural properties of CEs. The main result of this section is \autoref{thm:ind} which
characterizes the equilibrium prices/utilities.

%%%%%%%%%%%%%%%%%%%%%%%%%%%%%%%%%%%%%%%%%%%%%%%%%%%%%%%%%%%%%%%%%%%%%%%%%%%%%%%%%%%%%%%%%%%%%%%%%%%%%%%%%%%%%
% theorem: lattice
%%%%%%%%%%%%%%%%%%%%%%%%%%%%%%%%%%%%%%%%%%%%%%%%%%%%%%%%%%%%%%%%%%%%%%%%%%%%%%%%%%%%%%%%%%%%%%%%%%%%%%%%%%%%%

We start by showing that the set of CEs is a complete lattice. Consider a market $\Market=(\Buyers, \Goods,
\UFun)$, and let $\EQ=(\UVec, \PVec)$ and $\EQ'=(\UVec', \PVec')$ be two arbitrary CEs of $\Market$ with
supporting matchings $\MVec$ and $\MVec'$ respectively. We define the $\min$ and $\max$ operators for CEs as
follows.

\begin{align*}
    \min(\EQ,\EQ') & =(\max(\UVec,\UVec'), \min(\PVec,\PVec')) \qquad&
    \text{with supporting matching} \quad \MVec'' &=
    \begin{cases}
        \MVec(i) & \UVec_i \ge \UVec'_i \\
        \MVec'(i) & \UVec_i < \UVec'_i
    \end{cases}\\
    \max(\EQ,\EQ') & =(\min(\UVec,\UVec'), \max(\PVec,\PVec')) \qquad&
    \text{with supporting matching} \quad \MVec''' &=
    \begin{cases}
        \MVec'(i) & \UVec_i \ge \UVec'_i \\
        \MVec(i) & \UVec_i < \UVec'_i
    \end{cases}
\end{align*}

Note that we assume $\min$ and $\max$ operators applied to vectors (e.g., price vector, utility vector)
return respectively the component-wise minimum and maximum of those vectors. The following theorem was
originally proved by~\citet{DG85}.

\begin{theorem}[Equilibrium Lattice~\citet{DG85}]
\label{thm:lattice}%
Given a market $\Market=(\Buyers, \Goods, \UFun)$, if $\EQ$ and $\EQ'$ are any two CEs of $\Market$, then
$\EQ''=\min(\EQ,\EQ')$ and $\EQ'''=\max(\EQ,\EQ')$ are also CEs of $\Market$. Consequently the set of all CEs
is a complete lattice and has a unique minimum and a unique maximum~\footnote{That is because the set of all
CEs is a closed and compact set, which can be proved by applying the Bolzano–Weierstrass theorem and using
the fact that the set of all possible matchings is finite.}.
\end{theorem}

We will refer to the CE with the lowest prices as the \emph{lowest CE} and the one with the highest prices as
the \emph{highest CE}. Observe that the lowest CE has the highest utilities and the highest CE has the lowest
utilities. Throughout the rest of this paper we implicitly use the lattice structure of the set of CEs
without making explicit references to \autoref{thm:lattice}.

%%%%%%%%%%%%%%%%%%%%%%%%%%%%%%%%%%%%%%%%%%%%%%%%%%%%%%%%%%%%%%%%%%%%%%%%%%%%%%%%%%%%%%%%%%%%%%%%%%%%%%%%%%%%%
% theorem: inductive equilibrium
%%%%%%%%%%%%%%%%%%%%%%%%%%%%%%%%%%%%%%%%%%%%%%%%%%%%%%%%%%%%%%%%%%%%%%%%%%%%%%%%%%%%%%%%%%%%%%%%%%%%%%%%%%%%%

\citet{Q82} and \citet{G84} presented existential arguments (non-constructive) proving that set of CEs is
always non-empty. The main contribution of this paper is an inductive characterization of equilibria that
relates the prices/utilities of the highest/lowest CE of a market to those of a strictly smaller market by
removing either a buyer or a good. This inductive characterization automatically yields a constructive proof
of the non-emptiness of the set of CEs.

For any given market $\Market=(\Buyers, \Goods, \UFun)$, let $\Market_{-i}=(\Buyers-\{i\},\Goods,\UFun)$
(i.e., the same market without buyer $i$), and let $\Market^{-j}=(\Buyers,\Goods-\{j\},\UFun)$ (i.e. the same
market without good $j$). The following theorem relates the equilibrium price/utility of any good/buyer to
those of a strictly smaller market which excludes the respective good/buyer; hence yielding an inductive
approach for computing equilibrium prices/utilities.

\begin{theorem}[Inductive Equilibrium Characterization]
\label{thm:ind}%
Consider an arbitrary market $\Market=(\Buyers, \Goods, \UFun)$. Let $\EQL$ be the lowest CE of $\Market$,
$\EQH$ be the highest CE of $\Market$, $\EQL^{-j}$ be the lowest CE of $\Market^{-j}$, and $\EQH_{-i}$ be the
highest CE of $\Market_{-i}$. Then

\renewcommand{\labelenumi}{\Roman{enumi}.}
\renewcommand{\theenumi}{\arabic{theorem}.\Roman{enumi}}

\begin{enumerate}
\item $\UFun_i(\EQL)=\UFun_i(p(\EQH_{-i}))$, \label{eq:i}
\item $\PFun^j(\EQH)=\PFun^j(u(\EQL^{-j}))$. \label{eq:ii}
\end{enumerate}
Furthermore,
\begin{enumerate}[resume]
\item $\PFun^j(\EQL) \le \PFun^j(\EQH_{-i})$, in particular, if $i$ and $j$ are matched in $\EQL$, then $\PFun^j(\EQL) = \PFun^j(\EQH_{-i})$, \label{eq:iii}
\item $\UFun_i(\EQH) \le \UFun_i(\EQL^{-j})$, in particular, if $i$ and $j$ are matched in $\EQH$, then $\UFun_i(\EQH) = \UFun_i(\EQL^{-j})$. \label{eq:iv}
\end{enumerate}
\renewcommand{\theenumi}{\Roman{enumi}}
\end{theorem}

Note that equilibrium prices/utility can be fully computed by recursive application of the first two
equations of the above theorem(recall that a CE can specified by either of its price vector or utility
vector). The result of the theorem can be interpreted as follows.

\begin{itemize}
\item
Equation \ref{eq:i}: utility of buyer $i$ at the lowest CE of market $\Market$ can be computed as follows.
Remove $i$ from the market. Compute the prices at the highest CE of the rest of the market. Offer those
prices to buyer $i$ and compute her utility from her most preferred good at those prices.

\item
Equation \ref{eq:ii}: The price of any good $j$ at the highest CE of the market $\Market$ can be computed as
follows. Remove good $j$ from the market. Compute the buyers' utilities at the lowest CE of the rest of the
market. Ask each buyer to name a price for good $j$ that would give her the same utility as what she
currently gets. Take the maximum among the named prices.
\end{itemize}

Combining the two equations of \autoref{thm:ind} yields the following characterization of payments.

\begin{corollary}
At the lowest CE of a market, the payment of each buyer for the good she has received is equal to the highest
price the rest of the market would be willing to pay for that good given their current utilities.
\end{corollary}

Notice the striking similarity between the above characterization and the payments in the
Vickrey--Clarke–-Groves (VCG) mechanism, i.e., that the payment of each buyer is equal to her externality on
the rest of the buyers; however recall that VCG cannot be applied here as it crucially depends on
quasi-linearity of utilities.

%%%%%%%%%%%%%%%%%%%%%%%%%%%%%%%%%%%%%%%%%%%%%%%%%%%%%%%%%%%%%%%%%%%%%%%%%%%%%%%%%%%%%%%%%%%%%%%%%%%%%%%%%%%%%
% lemma: continuity  aka 2.1
%%%%%%%%%%%%%%%%%%%%%%%%%%%%%%%%%%%%%%%%%%%%%%%%%%%%%%%%%%%%%%%%%%%%%%%%%%%%%%%%%%%%%%%%%%%%%%%%%%%%%%%%%%%%%

Next we show there is a continuum of CEs between the highest CE and the lowest CE. We also present a
inductive characterization for obtaining such CEs.

\begin{definition}[$(\UVecLB, \PVecLB)$-Bounded CE]
Given a market $\Market=(\Buyers, \Goods,\UFun)$ and a lower bound price vector $\PVecLB \in
\PosReals^\Goods$ and a lower bound utility vector $\UVecLB \in \PosReals^\Buyers$, we say that $\EQ$ is a
\emph{$(\UVecLB,\PVecLB)$-bounded CE} of $\Market$ iff $\EQ$ is a CE of $\Market$ and $\PFun(\EQ) \ge
\PVecLB$ and $\UFun(\EQ) \ge \UVecLB$.

%We also use $\WeB(\Market)$ to denote the set of all $(\UVecLB, \PVecLB)$-bounded CEs of $\Market$.
%Formally:
%
%\begin{align}
%    \WeB(\Market) &= \{W \in \We(\Market) | \PFun(\EQ) \ge \U{\PVec}, \UFun(\EQ) \ge \U{\UVec} \}
%\end{align}

Note that for a given $\UVecLB$ and $\PVecLB$, the $(\UVecLB,\PVecLB)$-bounded CEs of $\Market$ form a
complete sublattice of all the CEs of $\Market$. In particular, there is a lowest and a highest
$(\UVecLB,\PVecLB)$-bounded CE of $\Market$. Notice that for arbitrary $\UVecLB$ and $\PVecLB$, a
$(\UVecLB,\PVecLB)$-bounded CE may not necessarily exist.
\end{definition}

\begin{theorem}[Continuity]
\label{thm:cont}%
Assume a market $\Market=(\Buyers, \Goods, \UFun)$ with equal number of buyers and goods (i.e.,
$\Abs{\Buyers}=\Abs{\Goods}$) and lower bounds $\PVecLB \ge \ZeroVec$ and $\UVecLB \ge \ZeroVec$ on the
prices/utilities. If a $(\UVecLB, \PVecLB)$-bounded CE exists, then at the lowest $(\UVecLB,
\PVecLB)$-bounded CE there exists at least one good $j^* \in \Goods$ whose price is exactly equal to its
lower bound (i.e. equal to $\PVecLB^{j^*}$), and at the highest $(\UVecLB, \PVecLB)$-bounded CE there exists
at least one buyer $i^* \in \Buyers$ whose utility is exactly equal to her lower bound (i.e. equal to
$\UVecLB_{i^*}$).
\end{theorem}

The proof of the above theorem is based on defining a new market $\Market'=(\Buyers,\Goods, \UFun')$ with
transformed utility functions  ${\UFun'}_i^j(x) = \UFun_i^j(x+\PVecLB^j)-\UVecLB_i$. There is a one to one
correspondence between CEs of $\Market$ and $\Market'$~\footnote{This statement is not true if $\Market$ does
not have a  $(\UVecLB, \PVecLB)$-bounded CE.}. Given that there are equal number of buyers and goods,
applying \autoref{thm:ind} and removing a buyer (or a good) from $\Market'$ implies that there will be an
unmatched good (or unmatched buyer) in the remaining market which will have a zero price (or zero utility)
which then implies the claim in the original market.

The following corollary immediately follows the previous theorem by setting the lower bounds to zero.

\begin{corollary}
Given a market $\Market=(\Buyers, \Goods, \UFun)$ with $\Abs{\Buyers}=\Abs{\Goods}$:
\begin{itemize}
\item At the lowest CE, there is at least one good that has a price of $0$.
\item At the highest CE there is at least one buyer that has a utility of $0$.
\end{itemize}
\end{corollary}

As another immediate corollary of \autoref{thm:cont}, a continuum of CEs can be obtained as follows. Define
$\PVecLB(t) = (1-t)\PFun(\EQL)+t \PFun(\EQH)$. Now applying \autoref{thm:cont} and taking the lowest
$(\ZeroVec, \PVecLB(t))$-Bounded CE for $t \in \RangeAB{0}{1}$ yields a continuum of equilibria between the
lowest CE and the highest CE of $\Market$.

\begin{corollary}
Given a market $\Market=(\Buyers, \Goods, \UFun)$ with $\Abs{\Buyers}=\Abs{\Goods}$, there is a continuum of
equilibria between $\EQL$ and $\EQH$.
\end{corollary}

%%%%%%%%%%%%%%%%%%%%%%%%%%%%%%%%%%%%%%%%%%%%%%%%%%%%%%%%%%%%%%%%%%%%%%%%%%%%%%%%%%%%%%%%%%%%%%%%%%%%%%%%%%%%%
% theorem: extreme lemma
%%%%%%%%%%%%%%%%%%%%%%%%%%%%%%%%%%%%%%%%%%%%%%%%%%%%%%%%%%%%%%%%%%%%%%%%%%%%%%%%%%%%%%%%%%%%%%%%%%%%%%%%%%%%%

Throughout the rest of this section we present several theorems which capture important properties of
lowest/highest CEs. The next two theorems capture the demand structure in the lowest/highest CEs; they also
play a key role in the proof of \autoref{thm:ind}.

\begin{definition}[Demand Sets]
Consider a market $\Market=(\Buyers, \Goods, \UFun)$ and let $\EQ=(\UVec, \PVec)$ be a CE of $\Market$. For
each subset $\ISubsetS \subseteq \Buyers$ of buyers, we define the demand set of $\ISubsetS$ in $\EQ$ as
\begin{align*}
    \DemandSet_{\ISubsetS}(\EQ) =
        \left\{j \middle| \exists i \in \ISubsetS : j \in \argmax_{j' \in \Goods} \UFun_i^{j'}(\PVec^{j'}) \right\}.
\end{align*}
Similarly, for each subset $\GSubsetT \subseteq \Goods$ of goods, we define the demand set of $\GSubsetT$ in
$\EQ$ as
\begin{align*}
    \DemandSet^{\GSubsetT}(\EQ) =
        \left\{i \middle| \exists j \in \GSubsetT : i \in \argmax_{i' \in \Buyers} \PFun_{i'}^{j}(\UVec_{i'}) \right\}.
\end{align*}
\end{definition}

\begin{theorem}[Tightness]
\label{thm:tight}%
Consider a market $\Market=(\Buyers, \Goods, \UFun)$, and let $\EQ$ be a CE of $\Market$.
\begin{itemize}
\item
$\EQ$ is the lowest CE of $\Market$ iff $\Abs{\DemandSet^\GSubsetT(\EQ)} \ge \Abs{\GSubsetT}+1$ for every
subset $\GSubsetT \subseteq \Goods$ of goods with strictly positive prices (i.e., at least
$\Abs{\GSubsetT}+1$ buyers are (weakly) interested in $\GSubsetT$).
\item
$\EQ$ is the highest CE of $\Market$ iff $\Abs{\DemandSet_\ISubsetS(\EQ)} \ge \Abs{\ISubsetS}+1$ for every
subset $\ISubsetS \subseteq \Buyers$ of buyers with strictly positive utilities (i.e., buyers in $\ISubsetS$
are (weakly) interested in at least $\Abs{\ISubsetS}+1$ goods).
\end{itemize}
\end{theorem}

To get a better intuition of \autoref{thm:tight} suppose $\EQ$ is the lowest CE of a market $\Market$ and
suppose there is a subset $\GSubsetT$ of goods with strictly positive prices for which the statement of the
theorem fails to hold (i.e. $\Abs{\DemandSet^\GSubsetT(\EQ)} < \Abs{\GSubsetT}+1$); then, we could
conceptually decrease the prices of goods in $\GSubsetT$ down to the point where either a buyer out of
$\DemandSet^\GSubsetT(\EQ)$ becomes \underline{indifferent} between her current allocation and some good in
$\GSubsetT$; or one of the goods in $\GSubsetT$ hit the price of $0$. But then, we get a CE lower than $\EQ$
which contradicts $\EQ$ being the lowest CE. Despite the easy intuition the formal proof of this statement is
more involved.

%%%%%%%%%%%%%%%%%%%%%%%%%%%%%%%%%%%%%%%%%%%%%%%%%%%%%%%%%%%%%%%%%%%%%%%%%%%%%%%%%%%%%%%%%%%%%%%%%%%%%%%%%%%%
% theorem: Alternating Paths
%%%%%%%%%%%%%%%%%%%%%%%%%%%%%%%%%%%%%%%%%%%%%%%%%%%%%%%%%%%%%%%%%%%%%%%%%%%%%%%%%%%%%%%%%%%%%%%%%%%%%%%%%%%%%

The following is a direct consequence of \autoref{thm:tight}.

\begin{theorem}
\label{thm:mod}%
Consider a market $\Market=(\Buyers, \Goods, \UFun)$, and let $\EQ=(\UVec, \PVec)$ be a CE of $\Market$.
\begin{itemize}
\item
If and only if $\EQ$ is the highest CE of $\Market$, then $\EQ^{-j}=(\UVec, \PVec^{-j})$ is also a CE for the
market $\Market^{-j}$ for every $j \in \Goods$.

\item
If and only if $\EQ$ is the lowest CE of $\Market$, then $\EQ_{-i}=(\UVec_{-i}, \PVec)$ is also a CE for the
market $\Market_{-i}$ for every $i \in \Buyers$.
\end{itemize}
\end{theorem}

Intuitively, the above theorem says that at the highest CE, after removing any single good from the market,
the assignments can be modified to get a CE for the rest of the market with the same prices/same utility as
before; similarly, at the lowest CE, after removing any single buyer $i$ from the market, the assignments can
be modified to get a CE for the rest of the market with the same prices/same utilities.

\begin{proof}
We only prove the first statement since the proof of the second one is similar (completely symmetric).
Consider a supporting matching $\MVec$ for $\EQ$. If either $j$ is unmatched or the utility of buyer who is
matched to $j$ is $0$ we are done. Otherwise, we run the following process during which we maintain a subset
$\ISubsetS$ of buyers with strictly positive utilities such that from each buyer $i \in \ISubsetS$ there is
an alternating sequence of buyers/goods of the form $j'_1,i'_1,j'_2,i'_2,\ldots,j'_k,i'_k$ (of some length
$k$) where $j'_1=j$ and $i'_k=i$ and such that $i'_r$ is indifferent between $j'_r$ and $j'_{r+1}$ and
$\MVec(i'_r)=j'_r$, for every $r \in \RangeN{k}$. We initialize $\ISubsetS$ to be the singleton containing
the buyer that is matched to $j$. Since all buyers in $\ISubsetS$ have strictly positive utilities, they must
be weakly interested in at least $\Abs{\ISubsetS}+1$ goods. So there exists a buyer $i^* \in \ISubsetS$ who
is weakly interested in a good $j^*$ that is not currently matched to any of the buyers in $\ISubsetS$; if
$j^*$ is itself matched to another buyer with positive utility, we add that buyer to $\ISubsetS$ and repeat;
otherwise we assign $j^*$ to $i^*$ and switch the assignments along the alternating path from $i^*$ to $j$
which yields a supporting matching for $\EQ^{-j}$. Note that the above process always finds such a an
alternating path in at most $\Abs{\Buyers}-1$ iterations. The ``only if'' direction is trivial by applying
\autoref{thm:tight}.
\end{proof}

%%%%%%%%%%%%%%%%%%%%%%%%%%%%%%%%%%%%%%%%%%%%%%%%%%%%%%%%%%%%%%%%%%%%%%%%%%%%%%%%%%%%%%%%%%%%%%%%%%%%%%%%%%%%%
% theorem: Group Strategyproofness
%%%%%%%%%%%%%%%%%%%%%%%%%%%%%%%%%%%%%%%%%%%%%%%%%%%%%%%%%%%%%%%%%%%%%%%%%%%%%%%%%%%%%%%%%%%%%%%%%%%%%%%%%%%%%

Next, we show that the lowest CE is group strategyproof for buyers, i.e., that no group of buyers can collude
in such a way that they all get strictly higher utilities (assuming no side payments).

\begin{theorem}[Group Strategyproofness]
\label{thm:gsp}%
Given a market $\Market=(\Buyers, \Goods, \UFun)$, a mechanism that solicits buyer's utility functions and
computes the lowest CE of $\Market$ is group strategyproof. Formally, for every subset $\ISubsetS \subseteq
\Buyers$ of buyers who collude and misreport their utility functions, if $\EQL'$ denotes the lowest CE with
respect to the reported utility functions, then there is at least one buyer $i\in \ISubsetS$ whose utility at
$\EQL'$ is no better than her utility at $\EQL$.
\end{theorem}
\begin{proof}%[Proof of \autoref{thm:gsp}] \
The proof is by contradiction. Let $\ISubsetS$ be the largest subset of buyers who can collude and possibly
misreport their utility functions such that all of them obtain strictly higher utilities. Let $\EQL$ be the
lowest CE of $\Market$ with respect to the true utility functions and let $\EQL'$ be the lowest CE with
respect to the reported utility functions assuming that buyers is $\ISubsetS$ have colluded. Let $\GSubsetT$
be the subset of the goods that are matched to $\ISubsetS$ at $\EQL'$. Since all the buyers in $\ISubsetS$
are achieving strictly higher utilities at $\EQL'$, they must all be matched at $\EQL'$ (i.e.
$\Abs{\GSubsetT}=\Abs{\ISubsetS}$) and the prices of the goods in $\GSubsetT$ should be strictly lower at
$\EQL'$. That means the goods in $\GSubsetT$ must have had strictly positive prices in $\EQL$. By applying
\autoref{thm:tight}, we argue that there must have been a subset $\ISubsetS'$ of buyers of size at least
$\Abs{\GSubsetT}+1$ who were weakly interested in some of the goods in $\GSubsetT$ at $\EQL$. Consequently
all of the buyers in $\ISubsetS'$ must be getting a strictly higher utilities at $\EQL'$ because the prices
of all the goods in $\GSubsetT$ are strictly lower. But $\ISubsetS'$ is larger than $\ISubsetS$ which
contradicts our assumption that $\ISubsetS$ was the largest set of buyers who could all benefit from
collusion.
\end{proof}

Finally, we preset the proof of our main theorem.

%%%%%%%%%%%%%%%%%%%%%%%%%%%%%%%%%%%%%%%%%%%%%%%%%%%%%%%%%%%%%%%%%%%%%%%%%%%%%%%%%%%%%%%%%%%%%%%%%%%%%%%%%%%%%
% theorem: existence/inductive equilibrium
%%%%%%%%%%%%%%%%%%%%%%%%%%%%%%%%%%%%%%%%%%%%%%%%%%%%%%%%%%%%%%%%%%%%%%%%%%%%%%%%%%%%%%%%%%%%%%%%%%%%%%%%%%%%%

\makeatletter
\renewcommand{\p@enumii}{\theenumi.}
\renewcommand{\p@enumiii}{\p@enumii.\theenumii.}
\makeatother

\begin{proof}[Proof of \autoref{thm:ind}] \
We only prove \eqref{eq:i} and \eqref{eq:iii}. The proofs of \eqref{eq:ii} and \eqref{eq:iv} are completely
symmetric to the other two.

The plan of the proof is as follows. We remove an arbitrary buyer $i$ from the market and compute the highest
CE of the rest of the market. We then show that the prices at the highest CE of the market without $i$ leads
to a valid CE for the whole market (including buyer $i$) but with a possibly different matching. We also show
that the induced utility of buyer $i$ from these prices is the same as her utility at the lowest CE of the
whole market\footnote{Note that the resulting CE is not necessarily the lowest CE of $\Market$. Only the
utility of buyer $i$ is equal to his utility at the lowest CE of $\Market$.}. The detail of the construction
is as follows.

Choose an arbitrary buyer $i \in \Buyers$. Let $\Market_{-i}$ denote the market without buyer $i$ and let
$\EQH_{-i}$ be the highest CE of the market $\Market_{-i}$. Note that the market $\Market_{-i}$ is of size
$\Abs{\Buyers}+\Abs{\Goods}-1$ so by inductively applying \autoref{thm:ind} to $\Market_{-i}$ --- which is of
a strictly smaller size --- we can argue that there exists a CE for $\Market_{-i}$, thus the highest CE of
$\Market_{-i}$ is also well-defined. Let $\PVec=\PFun(\EQH_{-i})$ be the prices at $\EQH_{-i}$. We claim that
using the prices $\PVec$ for the market $\Market$ leads to a valid CE, which we denote by $\EQ$. In
particular, all the prices/utilities at $\EQ$ are the same as the prices/utilities at $\EQH_{-i}$ and the
utility of buyer $i$ is $\UFun_i(\PVec)$, however the matching of goods/buyers might be different. To obtain
a supporting matching for $\EQ$, we start with a supporting matching for $\EQH_{-i}$ and modify it as
follows. If $\UFun_i(\PVec) = 0$ then we can leave buyer $i$ unmatched and the matching does not need to be
changed. Otherwise, let $j$ be the good from which buyer $i$ achieves her highest utility at the current
prices, i.e. $j \in \argmax_{j' \in \Goods}\UFun_i^{j'}(\PVec^{j'})$. We match $j$ to buyer $i$ and apply
\autoref{thm:mod} to the market $\Market_{-i}$ to conclude that after removing good $j$ the matching for the
rest of the market can be modified appropriately to support a CE with the same prices/utilities as before.
Next, we prove each part of the theorem statement.
\begin{itemize}
\item
Proof of \eqref{eq:iii} (case 1) \fbox{\textbf{ $\PFun^j(\EQL) \le \PFun^j(\EQH_{-i})$}}: Notice that $\EQ$
is a CE of market $\Market$ which has the same prices as the $\EQH_{-i}$. The prices at the lowest CE of
$\Market$ are no more than the prices at $\EQ$ so $\PFun(\EQL) \le \PFun(\EQ) = \PFun(\EQH_{-i})$.

\item
Proof of \eqref{eq:iii} (case 2) \fbox{\textbf{$\PFun^j(\EQL) = \PFun^j(\EQH_{-i})$ when $\MVec(i)=j$}}:
Notice that since $i$ and $j$ are matched, if we remove both of them the rest of $\EQL$ is still a valid CE
for $\Market_{-i}^{-j}$. Let $\EQL_{-i}^{-j}$ denote the lowest CE of $\Market_{-i}^{-j}$. Note that both
$\EQL$ and $\EQL_{-i}^{-j}$ are valid CEs for $\Market_{-i}^{-j}$ but $\EQL_{-i}^{-j}$ is the lowest, so the
prices of goods $\Goods-\{j\}$ might only be lower at $\EQL_{-i}^{-j}$ and so the utilities of buyers
$\Buyers - \{i\}$ might only be higher at $\EQL_{-i}^{-j}$ and so the price induced by buyers $\Buyers-\{i\}$
on good $j$ might only be lower at $\EQL_{-i}^{-j}$ than the price induced by them on good $j$ at $\EQL$.
However, by applying \autoref{thm:ind} inductively on market $\Market_{-i}$ --- which is strictly smaller ---
and using \eqref{eq:ii}, we get that $\PFun^j(\EQH_{-i})$ is exactly the induced price of buyers
$\Buyers-\{i\}$ on good $j$ at $\EQL_{-i}^{-j}$. Therefore, $\PFun^j(\EQH_{-i})$ must be less than or equal
to the induced price on good $j$ at $\EQL$ which is itself less that or equal to $\PFun^j(\EQL)$. On the
other hand, from the previous paragraph we have $\PFun^j(\EQL) \le \PFun^j(\EQH_{-i})$, so the two must be
equal.

\item
Proof of \eqref{eq:i} \fbox{\textbf{$\UFun_i(\EQL)=\UFun_i(\PFun(\EQH_{-i}))$}}: If $i$ is matched with $j$
in $\EQL$ then $\UFun_i = \UFun_i^j(\PFun^j(\EQL))$ and by the previous statement $\PFun^j(\EQL) =
\PFun^j(\EQH_{-i})$. Therefore $\UFun_i(\EQL) = \UFun_i^j(\PFun^j(\EQH_{-i})) = \UFun_i(\PFun(\EQH_{-i}))$.
The last equality follows from the fact that we chose $j$ to be the good from which buyer $i$ obtains her
highest utility at prices $\PFun(\EQH_{-i})$.

\end{itemize}
\end{proof}

\section{Application to Ad-Auctions}
\label{sec:aa}%
In this section, we present a truthful mechanism for Ad-auctions that combines pay per click (a.k.a charge
per click (CPC)) advertisers and pay per impression (a.k.a charge per impression (CPM)) advertisers,
potentially with non-quasilinear utility functions. In particular, our mechanism is welfare maximizing and
group strategyproof regardless of whether the search engine and the advertisers have consistent beliefs about
the clickthrough rates.

\paragraph{Model}
Given a set of advertisers $\Buyers$ and a set of slots $\Goods$, the utility of advertiser $i$ for a click
on her ad being displayed on slot $j$ at a price of $x$ is given by $\UFun_i^j(x)$ which is a continuous and
decreasing function in $x$ \footnote{We assume that $\UFun_i^j(x)$ reaches $0$ for a high enough $x$.}. Note
that $x$ specifies payment per click for a CPC advertiser, or payment per impression for a CPM advertiser. We
say that a CPC advertiser $i$ has standard utility function if for all slots $j$: $\UFun_i^j(x) =
\CTR_i^j(\Val_i^j-x)$ in which $\Val_i^j$ is the advertiser's value for a click on slot $j$ and $\CTR_i^j$ is
the advertiser's estimate about her clickthrough rate (CTR), i.e., the probability that her ad is being
clicked on if displayed on slot $j$. We also say that a CPM advertiser $i$ has standard utility function if
for all slots $j$: $\UFun_i^j(x) = \Val_i^j-x$ in which $\Val_i^j$ is the advertiser's value for a click on
slot $j$. $\CTRS_i^j$ will denote the search engine's estimate of the CTR of advertiser $i$ on slot $j$; this
could potentially be different than $\CTR_i^j$ (i.e., advertisers and search engine may have different
beliefs). Furthermore, $\Val_i^j$ and $\CTR_i^j$ are advertiser's private information but $\CTRS_i^j$ is
publicly available.

Consider what happens if we applied VCG payments in this environment assuming there are only CPC advertisers
with standard utility function. The first of the following linear programs compute the welfare maximizing
allocation (based on advertisers' reports), and its dual computes prices/utilities.

\begin{align*}
    &\begin{aligned}
    \text{\textbf{(Primal)}} \\
    \operatorname{maximize} &\,& \sum_{i\in \Buyers} \sum_{j\in \Goods} & \CTR_i^j \Val_i^j \Alloc_i^j \\
    \text{subject to} && \textstyle \sum_{j \in \Goods} \Alloc_i^j & \le 1, && \forall i \in \Buyers,\\
                            && \textstyle \sum_{i \in \Buyers} \Alloc_i^j & \le 1, && \forall j \in \Goods, \\
                            && \Alloc_i^j & \ge 0
    \end{aligned}
    \BREAKCOL
    &\begin{aligned}
    \text{\textbf{(Dual)}} \\
    \operatorname{minimize} &\,& \sum_{i\in \Buyers} \UVec_i &+ \sum_{j\in \Goods} \PVec^j \\
    \text{subject to} && \UVec_i + \PVec^j & \ge \CTR_i^j \Val_i^j, && \forall i \in \Buyers, \forall j \in \Goods \\
                            && \UVec_i & \ge 0 \\
                            && \PVec^j & \ge 0
    \end{aligned}
\end{align*}

The set of solutions to the dual program would be the set of CEs of the market and the one with the lowest
prices would correspond to the lowest CE which would also coincide with the VCG payments/utilities. However,
the problem is that payments must be charged \emph{per click} while $\PVec^j$ represents the expected payment
\emph{per impression}. So, per click payments would be given by dividing $\PVec^j$ by the probability of a
click which is the CTR. Observe that dividing by $\CTR_i^j$ breaks the strategyproofness guarantee because
$\CTR_i^j$ is reported by the advertiser and reporting a higher $\CTR_i^j$ would give the advertiser a higher
chance of winning a better slot while at the same time it would lower the payment. Dividing by $\CTRS_i^j$
also breaks the strategyproofness when $\CTRS_i^j$ and $\CTR_i^j$ are different. Therefore, the
straightforward application of VCG payments fails.

Consider the Ad-Auction problem as a two sided matching market with ads on one side and slots on the opposite
side. For a CPC advertiser $i$, allocating slot $j$ and charging a payment of $x$ per click, yields an
expected utility of $\UFun_i^j(x)$ for the advertiser and generates an expected revenue of $x/\CTRS_i^j$ per
impression for the search engine (from the perspective of the search engine based on its own estimated CTRs).
For a CPM advertiser $i$, allocating slot $j$ and charging a payment of $x$ per impression, yields an
expected utility of $\UFun_i^j(x)$ for the advertiser and generates a revenue of $x$ per impression for the
search engine. Consider a CE in this market (treat each slot as an independent agent even though all slots
are owned by the search engine). Our proposed mechanism solicits advertisers' utility functions and outputs
as its outcome the CE that has the highest advertiser utilities.

\begin{definition}(Ad-Auction Competitive Equilibrium)
\label{def:aa}%
The mechanism solicits advertisers' utility functions $\UFun_i^j$ and defines
$\Market'=(\Buyers,\Goods,\UFun')$, where for a CPC advertiser the utility function is given by
${\UFun'}_i^j(x)=\UFun_i^j(x/\CTRS_i^j)$, and for a CPM advertiser the utility function is given by
${\UFun'}_i^j(x)=\UFun_i^j(x)$. Compute the lowest CE of $\Market'$, call it $\EQL=(\UVec,\PVec)$, and let
$\MVec$ be a supporting matching. Allocate to advertiser $i$ the slot $\MVec(i)$. If $i$ is a CPC advertiser,
charge her $\PVec^{\MVec(i)}/\CTRS_i^j$ if there is a click on the slot; otherwise $i$ is a CPM advertiser
and should be charged $\PVec^{\MVec(i)}$ for an impression.
\end{definition}

Recall that $\Market'$ is the reduced form of the original market which is obtained by redefining the utility
functions of one side in terms of the utilities of the other side. Observe that $\EQL$ satisfies the
following inequality for each CPC advertiser $i$:
\begin{align*}
    \UVec_i &= \max_{j\in \Goods} \UFun_i^j(\PVec^j/\CTRS_i^j) & \forall j\in \Goods.
\end{align*}
it also satisfies the following inequality for each CPM advertiser $i$:
\begin{align*}
    \UVec_i &= \max_{j\in \Goods} \UFun_i^j(\PVec^j) & \forall j\in \Goods.
\end{align*}
$\PVec^j$ can be interpreted as the \emph{virtual price} of slot $j$, where as $\PVec^j/\CTRS_i^j$ can be
interpreted as the weighted price of slot $j$ for a CPC advertiser $i$. Observe that the virtual price
$\PVec^j$ represents the expected per impression payment required for slot $j$ computed from the perspective
of the search engine. Notice that the above inequalities ensure that each advertiser receives their preferred
slot according to their weighted prices.

The above mechanism can also be conceptually reinterpreted as an ascending auction in which all virtual
prices $\PVec_i^j$ start from $0$ and are gradually increased as long as there is excess demand; and such
that at any time during the auction a CPC advertiser $i$ observes a price of $\PVec^j/\CTRS_i^j$ for slot $j$
whereas a CPM advertiser $i$ observes a price of $\PVec^j$ for slot $j$.

Next theorem summarizes the important properties of the above mechanism.

%todo: we can show that if A's belief are with \alpha factor of the truth, we get \alpha^2 of the efficiency or maybe better
\begin{theorem}
\label{thm:aa}%
Mechanism of \autoref{def:aa} is group strategyproof and also maximizes the social welfare in the following
sense. Let $\ISubsetS$ be a group of advertisers with standard utility functions who also agree with the
search engine on the CTRs and let $\ISubsetS'$ be the rest of the advertisers. Let $s$ denote the search
engine. Welfare of $\{s\} \cup \ISubsetS$ is always maximized. In particular, if all advertisers have
standard utility functions and agree with the search engine on the CTRs, then the outcome of this mechanism
coincides exactly with the VCG outcome.
\end{theorem}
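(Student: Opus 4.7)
The plan decomposes into three parts. First, group strategyproofness will follow immediately from the reduction machinery already in place. By \autoref{cor:pd}, \autoref{mech:aa} is exactly the mechanism that outputs the lowest competitive equilibrium of the transformed market $M'=(I,J,\{{u'}_i^j\})$ with ${u'}_i^j(\cdot)=u_i^j(g_i^j(\cdot))$. Since each $g_i^j$ is fixed in advance and depends only on the publicly-announced $\Ch_i^j$, every joint misreport of the $u_i^j$'s in the original market corresponds one-to-one to a joint misreport of the ${u'}_i^j$'s in $M'$ with identical induced payoffs, so \autoref{thm:gsp} applied to $M'$ transfers directly.

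For the welfare claim, I would first establish a key algebraic identity. For $i\in A$, standard utility plus CTR agreement makes ${u'}_i^j$ quasilinear in the primary price: if $i$ is CPC then ${u'}_i^j(\Pv^j) = c_i^j(v_i^j - \Pv^j/\Ch_i^j) = w_i^j - \Pv^j$ with $w_i^j := \Ch_i^j v_i^j$, and if $i$ is CPM then ${u'}_i^j(\Pv^j) = v_i^j - \Pv^j$ with $w_i^j := v_i^j$. Moreover, in either CPC or CPM case and for any advertiser in $A \cup A'$, the expected per-impression revenue the search engine collects from matching $i$ to $j$ is exactly $\Pv^j$. Letting $(\Mv,\Pv)$ be the mechanism's outcome, the welfare of $\{s\}\cup A$ therefore equals
\[
\sum_{i\in A}{u'}_i^{\Mv(i)}(\Pv^{\Mv(i)}) + \sum_j \Pv^j \;=\; \sum_{i\in A} w_i^{\Mv(i)} + \sum_{i\in A'}\Pv^{\Mv(i)},
\]
using $\sum_j \Pv^j = \sum_{i\in A}\Pv^{\Mv(i)} + \sum_{i\in A'}\Pv^{\Mv(i)}$ since unmatched slots have price $0$.

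The heart of the proof is then a single CE-based inequality. For any $i\in A$ and any $j$, the CE condition in $M'$ gives $w_i^{\Mv(i)} - \Pv^{\Mv(i)} \ge w_i^j - \Pv^j$ (using the convention $w_i^\emptyset = \Pv^\emptyset = 0$). Summing along any partial matching $\nu:A\to J$ and using that $\nu$ is injective together with $\Pv^j=0$ on unmatched slots yields
\[
\sum_{i\in A} w_i^{\nu(i)} \;\le\; \sum_{i\in A} w_i^{\Mv(i)} + \sum_{i\in A'}\Pv^{\Mv(i)}.
\]
Taking $\nu$ to be the maximum-weight matching of $A$ into $J$ in $M_{-A'}$ shows that $A'$'s presence weakly increases the welfare of $\{s\}\cup A$. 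Restricting $\nu$ to avoid slots held by $A'$ collapses the price terms to $\sum_{i\in A}\Pv^{\nu(i)} \le \sum_{i\in A}\Pv^{\Mv(i)}$, so $\sum_{i\in A} w_i^{\nu(i)} \le \sum_{i\in A} w_i^{\Mv(i)}$ and $\Mv|_A$ is welfare-optimal for $\{s\}\cup A$ given $A'$'s allocation. When $A' = \emptyset$, the market $M'$ is purely quasilinear with values $w_i^j$ and its lowest CE coincides with the VCG outcome by Leonard's correspondence \cite{L91}.

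The hard part is the price bookkeeping around $A'$: the CE inequality for $i\in A$ applies to all slots, including those occupied by $A'$, and the whole argument rests on the observation that the ``deficit'' $\sum_{i\in A}\Pv^{\nu(i)} - \sum_{i\in A}\Pv^{\Mv(i)}$ is bounded by exactly $\sum_{i\in A'}\Pv^{\Mv(i)}$, which is precisely what $A'$ pays to the search engine. This single accounting step converts the standard quasilinear CE optimality for $A$ into a joint-welfare statement for $\{s\}\cup A$ that also monotonically benefits from $A'$'s participation.
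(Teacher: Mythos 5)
Your proposal is correct and follows essentially the same route as the paper: group strategyproofness via \autoref{cor:pd} and \autoref{thm:gsp}, then the observation that for $i\in A$ the transformed utilities are quasilinear with weight $c_i^j v_i^j$, so the competitive-equilibrium (dual-feasibility) inequalities summed over any matching of $A$ into $J$ lower-bound the coalition's welfare by the maximum-weight matching value attainable without $A'$, with the VCG coincidence for $A'=\emptyset$ via Leonard's correspondence. Your version merely makes explicit the price bookkeeping (the $\sum_{i\in A'}\Pv^{\Mv(i)}$ term) that the paper's one-line weak-duality argument leaves implicit.
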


Notice that mechanism \ref{def:aa} is group strategyproof regardless of whether the search engine and
advertisers have the same estimates about the clickthrough rates.

\section{Acknowledgement}
We thank Rakesh Vohra, Lawrence Ausubel, John Hatfield, Scott Kominers, Jason Hartline, Nicole Immorlica,
Sebastien Lahaie, David Easley, Larry Blume and Martin Pal for helpful discussions and/or comments.

\bibliography{ce}

\begin{thebibliography}{}

\bibitem[Alkan and Gale, 1990]{AG89}
Alkan, A. and Gale, D. (1990).
\newblock The core of the matching game.
\newblock {\em Games and Economic Behavior}, 2(3):203--212.

\bibitem[Ausubel and Milgrom, 2002]{AM02}
Ausubel, L.~M. and Milgrom, P. (2002).
\newblock Ascending auctions with package bidding.
\newblock Working Papers 02004, Stanford University, Department of Economics.

\bibitem[Caplin and Leahy, 2010a]{CL10b}
Caplin, A. and Leahy, J.~V. (2010a).
\newblock Comparative statics in markets for indivisible goods.
\newblock Working paper, National Bureau of Economic Research.

\bibitem[Caplin and Leahy, 2010b]{CL10a}
Caplin, A. and Leahy, J.~V. (2010b).
\newblock A graph theoretic approach to markets for indivisible goods.
\newblock Nber working papers.

\bibitem[Demange and Gale, 1985]{DG85}
Demange, G. and Gale, D. (1985).
\newblock The strategy structure of two-sided matching markets.
\newblock {\em Econometrica}, 53(4):873--88.

\bibitem[Demange et~al., 1986]{DGS86}
Demange, G., Gale, D., and Sotomayor, M. (1986).
\newblock Multi-item auctions.
\newblock {\em Journal of Political Economy}, 94(4):863--72.

\bibitem[Gale, 1984]{G84}
Gale, D. (1984).
\newblock Equilibrium in a discrete exchange economy with money.
\newblock {\em International Journal of Game Theory}, 13(1):61--64.

\bibitem[Gale and Shapley, 1962]{GS62}
Gale, D. and Shapley, L.~S. (1962).
\newblock College admissions and the stability of marriage.
\newblock volume~69, pages 9--15.

\bibitem[Gul and Stacchetti, 1999]{GS99}
Gul, F. and Stacchetti, E. (1999).
\newblock Walrasian equilibrium with gross substitutes.
\newblock {\em Journal of Economic Theory}, 87(1):95--124.

\bibitem[Hatfield and Kominers, 2010]{HK10}
Hatfield, J.~W. and Kominers, S.~D. (2010).
\newblock Matching in networks with bilateral contracts: extended abstract.
\newblock In {\em Proceedings of the 11th ACM conference on Electronic
  commerce}, EC '10.

\bibitem[Hatfield and Milgrom, 2005]{HM05}
Hatfield, J.~W. and Milgrom, P.~R. (2005).
\newblock Matching with contracts.
\newblock {\em American Economic Review}, 95(4):913--935.

\bibitem[Kaneko and Yamamoto, 1986]{KY86}
Kaneko, M. and Yamamoto, Y. (1986).
\newblock The existence and computation of competitive equilibria in markets
  with an indivisible commodity.
\newblock {\em Journal of Economic Theory}, 38(1):118--136.

\bibitem[Kelso and Crawford, 1982]{KC82}
Kelso, Alexander~S, J. and Crawford, V.~P. (1982).
\newblock Job matching, coalition formation, and gross substitutes.
\newblock {\em Econometrica}, 50(6):1483--1504.

\bibitem[Knaster et~al., 1929]{KKM29}
Knaster, B., Kuratowski, C., and Mazurkiewicz, S. (1929).
\newblock Ein beweis des fixpunktsatzes für n-dimensionale simplexe.
\newblock {\em Fund. Math}, 14:132--137.

\bibitem[Kuhn, 1956]{K56}
Kuhn, H.~W. (1956).
\newblock Variants of the hungarian method for assignment problems.
\newblock page 253–258.

\bibitem[Lemke and Howson, 1964]{LH64}
Lemke, C.~E. and Howson, J.~T., J. (1964).
\newblock Equilibrium points of bimatrix games.
\newblock {\em Journal of the Society for Industrial and Applied Mathematics},
  12(2):413--423.

\bibitem[Leonard, 1983]{L91}
Leonard, H.~B. (1983).
\newblock Elicitation of honest preferences for the assignment of individuals
  to positions.
\newblock {\em Journal of Political Economy}, 91(3):461--79.

\bibitem[Mamer, 1997]{bm97}
Mamer, J.and~Bikhchandani, S. (1997).
\newblock {\em Journal of Economic Theory}, 74:385--413.

\bibitem[Mani et~al., 2010a]{MOP10a}
Mani, A., Ozdaglar, A.~E., and Pentland, A. (2010a).
\newblock Existence of stable exclusive bilateral exchanges in networks.
\newblock {\em CoRR}.

\bibitem[Mani et~al., 2010b]{MOP10b}
Mani, A., Ozdaglar, A.~E., and Pentland, A. (2010b).
\newblock Generalized stable matching in bipartite networks.
\newblock {\em CoRR}.

\bibitem[Quinzii, 1984]{Q82}
Quinzii, M. (1984).
\newblock Core and competitive equilibria with indivisibilities.
\newblock {\em International Journal of Game Theory}, 13:41--60.

\bibitem[Scarf, 1967]{S67}
Scarf, H.~E. (1967).
\newblock The core of an n person game.
\newblock {\em Econometrica}, 35(1).

\bibitem[Shapley and Shubik, 1971]{SS71}
Shapley, L. and Shubik, M. (1971).
\newblock The assignment game i: The core.
\newblock 1(1):111–130.

\end{thebibliography}
\appendix

\section{Other Results and Omitted Proofs}
\label{sec:proofs}%

This section presents the proofs of the theorems which were omitted from the previous sections along with a
few lemmas which are used in those proofs.

\begin{lemma}[Entanglement]
\label{lem:entangle}%
%\X{simplify the proof!!!}%
Consider a market $\Market=(\Buyers, \Goods, \UFun)$. If there exists a CE of $\Market$ at which buyer $i$ is
matched with good $j$, then at any other CE of $\Market$ the price of good $j$ is higher if and only if the
utility of buyer $i$ is lower and vice versa. Note that this statement is true regardless of wether buyer $i$
and good $j$ are actually matched to each other in other CEs.

%Let $\MVec = \MFun(\EQ)$ and $\MVec' = \mu(\EQ')$ be supporting matchings for $\EQ$ and $\EQ'$. Let $j = \MVec(i)$. Then the
%following statements always hold:
%\begin{itemize}
%\item $\UFun_i(\EQ) > \UFun_i(\EQ')$ if and only if $\PFun^j(\EQ) < \PFun^j(\EQ')$.
%\item $\UFun_i(\EQ) < \UFun_i(\EQ')$ if and only if $\PFun^j(\EQ) > \PFun^j(\EQ')$.
%\item $\UFun_i(\EQ) = \UFun_i(\EQ')$ if and only if $\PFun^j(\EQ) = \PFun^j(\EQ')$.
%\end{itemize}
%Note that the first and second statement are not equivalent because $i$ is matched to $j$ in $\EQ$ and might be
%matched to a different good in $\EQ'$ so we can't simply switch $\EQ$ and $\EQ'$ to get the second statement from
%the first one.
\end{lemma}

%%%%%%%%%%%%%%%%%%%%%%%%%%%%%%%%%%%%%%%%%%%%%%%%%%%%%%%%%%%%%%%%%%%%%%%%%%%%%%%%%%%%%%%%%%%%%%%%%%%%%%%%%%%%%
% lemma: matching
%%%%%%%%%%%%%%%%%%%%%%%%%%%%%%%%%%%%%%%%%%%%%%%%%%%%%%%%%%%%%%%%%%%%%%%%%%%%%%%%%%%%%%%%%%%%%%%%%%%%%%%%%%%%%
%?? check this

\begin{lemma}[Conservation of Matching]
\label{lem:match}%
Given a market $\Market=(\Buyers, \Goods, \UFun)$, for any $i \in \Buyers$, if there exists a CE of $\Market$
at which buyer $i$ has a strictly positive utility, then buyer $i$ is never unmatched in any CE of $\Market$.
Similarly, for any $j \in \Goods$, if there exists a CE of $\Market$ at which good $j$ has a strictly
positive price, then good $j$ is never unmatched at any CE of $\Market$.
\end{lemma}

\begin{proof}[Proof of \autoref{lem:entangle} and \autoref{lem:match}]
Let $\EQ$ be a CE of $\Market$ at which the hypothesis of the lemma holds, and let $\EQ'$ be any other CE of
$\Market$. Partition the buyers to three groups $\ISubsetS$, $\ISubsetS'$, $\ISubsetS''$, such that buyers in
$\ISubsetS$ have higher utilities at $\EQ$, buyers in $\ISubsetS'$ have higher utilities at $\EQ'$, and
buyers in $\ISubsetS''$ have the same utilities at $\EQ$ and $\EQ'$. Similarly, partitions the goods to
$\GSubsetT$, $\GSubsetT'$, $\GSubsetT''$, such that goods in $\GSubsetT$ have higher prices at $\EQ$, goods
in $\GSubsetT'$ have higher prices and $\EQ'$, and goods in $\GSubsetT''$ have the same prices at both $\EQ$
and $\EQ'$. The following statements are easy to derive using the definition of CE and using the fact that
both $\EQ$ and $\EQ'$ are CEs:

\begin{itemize}
\item At $\EQ$, all buyers in $\ISubsetS$ must be matched to goods in $\GSubsetT'$ so $\Abs{\ISubsetS} \le \Abs{\GSubsetT'}$.
\item At $\EQ'$, all goods in $\GSubsetT'$ must be matched to buyers in $\ISubsetS$ so $\Abs{\GSubsetT'} \le \Abs{\ISubsetS}$.
\end{itemize}

From the above statement, we can conclude $\Abs{\ISubsetS}=\Abs{\GSubsetT'}$ and buyers in $\ISubsetS$ and
goods in $\GSubsetT'$ must be matched to each other at both equilibria. Similarly:

\begin{itemize}
\item At $\EQ$, all goods in $\GSubsetT$ must be matched to buyers in $\ISubsetS'$ so $\Abs{\GSubsetT} \le \Abs{\ISubsetS'}$.
\item At $\EQ'$, all buyers in $\ISubsetS'$ must be matched to goods in $\GSubsetT$ so $\Abs{\ISubsetS'} \le \Abs{\GSubsetT}$.
\end{itemize}

So, we can conclude $\Abs{\ISubsetS'}=\Abs{\GSubsetT}$, therefore buyers in $\ISubsetS'$ and goods in
$\GSubsetT$ must be matched to each other at both equilibria. Furthermore, we can then conclude that buyers
in $\ISubsetS''$ and goods in $\GSubsetT''$ may only be matched to each other. That proves the claim of
\autoref{lem:entangle}. To complete the proof of \autoref{lem:match} observe that any good/buyer that has
positive price/utility at $\EQ$ either has a positive price/utility at $\EQ'$ in which case it must also be
matched at $\EQ'$, or has a $0$ price/utility at $\EQ'$ in which case it must be in $\GSubsetT$/$\ISubsetS$
and therefore it must also be matched at $\EQ'$.
\end{proof}

%%%%%%%%%%%%%%%%%%%%%%%%%%%%%%%%%%%%%%%%%%%%%%%%%%%%%%%%%%%%%%%%%%%%%%%%%%%%%%%%%%%%%%%%%%%%%%%%%%%%%%%%%%%%%
% proof of theorem Continuity
%%%%%%%%%%%%%%%%%%%%%%%%%%%%%%%%%%%%%%%%%%%%%%%%%%%%%%%%%%%%%%%%%%%%%%%%%%%%%%%%%%%%%%%%%%%%%%%%%%%%%%%%%%%%%

\begin{proof}[Proof of \autoref{thm:cont}]
We only prove the first claim. The proof of the second claim is similar (completely symmetric). The plan of
the proof is as follows:

First, we define a new market $\Market'=(\Buyers,\Goods, \UFun')$ with transformed utility functions
${\UFun'}_i^j(x) = \UFun_i^j(x+\PVecLB^j)-\UVecLB_i$. We claim that there is a one-to-one mapping between
$(\UVecLB,\PVecLB)$-bounded CEs of the original market and the CEs of the transformed market. Formally,
$\EQ=(\UVec,\PVec,\MVec)$ is a $(\UVecLB,\PVecLB)$-bounded CE of $\Market$ if and only if
$\EQ'=(\UVec-\UVecLB, \PVec-\PVecLB,\MVec)$ is a CE of $\Market'$. We then show that there is CE of
$\Market'$ in which there is a good with a price of $0$ which then means in the corresponding CE of the
original market the price of that good is equal to its lower bound and therefore at the lowest $(\UVecLB,
\PVecLB)$-bounded CE of $\Market$ the price of that good must also be equal to its lower bound which proves
the claim.

We now prove that there is a good with a price of $0$ at the lowest CE of $\Market'$. We choose an arbitrary
buyer $i$ from $\Market'$ and remove it from the market. Let $\EQH_{-i}$ be the highest CE of the remaining
market. By the assumption of the lemma, we know $\Abs{\Buyers}=\Abs{\Goods}$ and so in $\EQH_{-i}$ there are
more goods than there are buyers so there must be an unmatched good which we denote by $j^*$. Note that the
price of $j^*$ in $\EQH_{-i}$ must be $0$. On the other hand, by applying \autoref{thm:ind} to $\Market'$ and
using \eqref{eq:iii} we have $\PFun^j(\EQL) \le \PFun^j(\EQH_{-i})$ for every good $j$. Therefore, it must be
that the price of $j^*$ in $\EQL$ is also $0$ and that completes the proof.

There is a subtlety that we should point out about the one-to-one mapping between the CEs of the original
market and those of the transformed market. It is clear that every $(\UVecLB, \PVecLB)$-bounded CE of
$\Market$ can be transformed to a CE of $\Market'$. However, for the other direction, we need to show that
all goods/buyers are matched, otherwise after applying the inverse transform we may end up with an unmatched
good/buyer that has a positive price/utility. To show that all buyers/goods are matched in every CE of
$\Market'$, we can apply \autoref{lem:match}. To apply that lemma, we only need to show that there is a CE of
$\Market'$ in which all goods have strictly positive prices and then by that lemma all the goods must always
be matched (and so do all buyers because $\Abs{\Buyers}=\Abs{\Goods}$). Notice that if there is no CE for
$\Market'$ in which all goods have strictly positive prices then either in every $(\UVecLB, \PVecLB)$-bounded
CE of $\Market$ there is a good whose price is equal to its lower bound or $\Market$ has no $(\UVecLB,
\PVecLB)$-bounded CE at all which either way trivially proves the claim of this lemma.
\end{proof}

%todo: bonus material!
%As another immediate corollary, \autoref{thm:cont} shows that there is a continuum of equilibria between the
%$\inf(\We(\Market))$ and $\sup(\We(\Market))$. To see this we can do the following. Define $\U{p}(t) =
%(1-t)p(\inf(\We(\Market)))+t\cdot p(\sup(\We(\Market)))$. Now by applying \autoref{thm:cont} we can get a continuum of
%equilibria by taking $\inf(\We_{\U{p}}^{\Zv}(t))$ for each $t \in [0, 1]$.

%%%%%%%%%%%%%%%%%%%%%%%%%%%%%%%%%%%%%%%%%%%%%%%%%%%%%%%%%%%%%%%%%%%%%%%%%%%%%%%%%%%%%%%%%%%%%%%%%%%%%%%%%%%%%
% theorem: extreme lemma
%%%%%%%%%%%%%%%%%%%%%%%%%%%%%%%%%%%%%%%%%%%%%%%%%%%%%%%%%%%%%%%%%%%%%%%%%%%%%%%%%%%%%%%%%%%%%%%%%%%%%%%%%%%%%

\begin{proof}[Proof of \autoref{thm:tight}] \
We only prove the first statement. The proof of the second statement is similar (completely symmetric).

First, we prove the ``only if'' direction. Assume that $\EQ$ is the lowest CE of $\Market$. For every subset
$\GSubsetT$ of goods with strictly positive prices, we prove that $\Abs{\DemandSet^\GSubsetT(\EQ)} \ge
\Abs{\GSubsetT}+1$, i.e. there are at least $\Abs{\GSubsetT}+1$ buyers who are interested in some good in
$\GSubsetT$. The proof is as follows. Since all the goods in $\GSubsetT$ have strictly positive prices, they
must all be matched. Let $\ISubsetS$ be the subset of buyers that are matched to $\GSubsetT$. Notice that
$\ISubsetS \subset \DemandSet^\GSubsetT(\EQ)$ and $\Abs{\ISubsetS}=\Abs{\GSubsetT}$. Therefore, to complete
the proof we only need to show that there is one more buyer not in $\ISubsetS$ who is also interested in a
good in $\GSubsetT$. Let $\PVecLB$ be the prices induced by utilities of buyers not in $\ISubsetS$, i.e.
$\PVecLB^j = \max_{i \in \Buyers-\ISubsetS} \PFun_i^j(\UFun_i(\EQ))$. Similarly, let $\UVecLB$ be the
utilities induced by the prices of goods not in $\GSubsetT$, i.e. $\UVecLB_i = \max_{j \in \Goods-\GSubsetT}
\UFun_i^j(\PFun^j(\EQ))$. Notice that $\EQ$ is a $(\UVecLB, \PVecLB)$-bounded CE of the market
$\Market'=(\ISubsetS,\GSubsetT, \UFun)$. Furthermore, if we replaced the part of $\EQ$ corresponding to
$\ISubsetS$ and $\GSubsetT$ with any other $(\UVecLB, \PVecLB)$-bounded CE of $\Market'$, we still get a
valid CE for $\Market$, but that implies $\EQ$ must already be the lowest $(\UVecLB, \PVecLB)$-bounded CE of
$\Market'$ as well because otherwise we could replace the part of $\EQ$ corresponding to $\ISubsetS$ and
$\GSubsetT$ with the lowest $(\UVecLB, \PVecLB)$-bounded CE of $\Market'$ and get a lower CE for $\Market$
which would contradict $\EQ$ being the lowest CE of $\Market$. By applying \autoref{thm:cont} to the market
$\Market'$, we can argue that there is a good $j^* \in \GSubsetT$ such that $\PFun^{j^*}(\EQ) =
\PVecLB^{j^*}$. Since all the goods in $\GSubsetT$, including $j^*$, have strictly positive prices,
$\PVecLB^{j^*}$ must also be strictly positive and because of the way we defined $\PVecLB$ there must be a
buyer $i^*$ not in set $\ISubsetS$ such that $\PVecLB^{j^*} = \PFun_{i^*}^{j^*}(\UVec_{i^*})$. That means
$i^*$ must be weakly interested in good $j^*$ and therefore $\{i^*\} \cup \ISubsetS \subset
\DemandSet^\GSubsetT(\EQ)$ which proves that $\DemandSet^\GSubsetT(\EQ) \ge \Abs{\GSubsetT}+1$.

The proof of the ``if'' direction is trivial. The proof is by contradiction. Let $\EQ$ be a CE of $\Market$
such that for every subset $\GSubsetT$ of goods with strictly positive prices we have
$\DemandSet^\GSubsetT(\EQ) \ge \Abs{\GSubsetT}+1$. Let $\EQL$ be the lowest CE of $\Market$ and assume that
$\EQ$ and $\EQL$ are not the same. Let $\GSubsetT$ consist of all the goods that have a higher price at $\EQ$
compared to $\EQL$. We know that there are at least $\Abs{\GSubsetT}+1$ buyers interested in $\GSubsetT$ at
$\EQ$ and these buyers must have higher utilities at $\EQL$ because the prices of the goods in $\GSubsetT$
are strictly lower. Therefore, the goods assigned to these buyers at $\EQL$ must have lower prices and so
there are at least $\Abs{\GSubsetT}+1$ goods that have higher prices at $\EQ$ compared to $\EQL$ which
contradicts the assumption that $\GSubsetT$ was the set of all the goods that had higher prices at $\EQ$.
\end{proof}

%%%%%%%%%%%%%%%%%%%%%%%%%%%%%%%%%%%%%%%%%%%%%%%%%%%%%%%%%%%%%%%%%%%%%%%%%%%%%%%%%%%%%%%%%%%%%%%%%%%%%%%%%%%%%
% theorem: ad-auctions
%%%%%%%%%%%%%%%%%%%%%%%%%%%%%%%%%%%%%%%%%%%%%%%%%%%%%%%%%%%%%%%%%%%%%%%%%%%%%%%%%%%%%%%%%%%%%%%%%%%%%%%%%%%%%
%%?? check on this

\begin{proof}[Proof of \autoref{thm:aa}]
The group strategyproofness follows from \autoref{thm:gsp}. So we only prove the second part. Suppose the
mechanism has computed a $\EQL=(\UVec,\PVec)$ which is the lowest CE of $\Market'$ as its outcome. Recall
that $\PVec^j$ is the virtual price of slot $j$ and the expected utility of advertiser $i$ from slot $j$ is
given by ${\UFun'}_i^j(\PVec^j)$. So for each advertiser $i\in \ISubsetS$ we have ${\UFun'}_i^j(\PVec^j) =
\CTR_i^j(\Val_i^j - \PVec^j/\CTRS_i^j)$. Furthermore, since $\CTR_i^j = \CTRS_i^j$, we can simplify the
utility function and get $\UFun_i^j(\PVec^j) = \CTR_i^j \Val_i^j - \PVec^j$. Now, consider the complete
bipartite graph with advertisers and slots. Let the weight of each edge $(i,j)$ be $\CTR_i^j \Val_i^j$. Note
that for each advertiser $i \in \ISubsetS$ we have $\UFun_i(\EQL)+\PFun^j(\EQL) \ge \CTR_i^j \Val_i^j$ in
which $\EQL$ is the outcome of the mechanism. Therefore, the total expected welfare of the coalition $\{s\}
\cup \ISubsetS$ is at least as much as the weight of the maximum weight matching in the absence of
$\ISubsetS'$. Furthermore, if $\ISubsetS'$ is empty (i.e. everyone agrees on the CTRs), the mechanism
computes the efficient allocation (i.e., a maximum weight matching) and the outcome is the same as the VCG
outcome.
\end{proof}

\end{document}